\newtheorem{theorem}{Theorem}
\newtheorem{lemma}[theorem]{Lemma}
\newtheorem{definition}[theorem]{Definition}
\newtheorem{note}[theorem]{Note} 
\let\oldnote\note
\renewcommand{\note}{\oldnote\normalfont}
\numberwithin{equation}{section}
\numberwithin{theorem}{section}
\newcommand{\wlim}{\operatorname{w-lim}}
\newcommand{\N}{\mathbb{N}}
\newcommand{\R}{\mathbb{R}}
\newcommand{\C}{\mathbb{C}}
\renewcommand{\H}{\mathcal{H}}
\newcommand{\vint}{v_\mathrm{int}}
\renewcommand{\d}{\,\mathrm{d}} 
\renewcommand{\i}{\mathrm{i}} 
\newcommand{\e}{\mathrm{e}} 
\DeclareRobustCommand{\onehalf}{\textstyle \frac{1}{2}}
\begin{document}
\title{Regularity for evolution equations with non-autonomous perturbations in Banach spaces}
\author{Markus Penz}
\email[corresponding author: ]{markus.penz@mpsd.mpg.de}
\affiliation{Max Planck Institute for the Structure and Dynamics of Matter, Hamburg, Germany}

\begin{abstract}
We provide regularity of solutions to a large class of evolution equations on Banach spaces where the generator is composed of a static principal part plus a non-autonomous perturbation. Regularity is examined with respect to the graph norm of the iterations of the principal part. The results are applied to the Schrödinger equation and conditions on a time-dependent scalar potential for regularity of the solution in higher Sobolev spaces are derived.
\\
(Published in a more compact form as: \textit{J.~Math.~Phys.}~\textbf{59}, 103512 (2018))
\end{abstract}


\maketitle

\section{Introduction}

In this note we study the existence and regularity of solutions to the evolution equation
\begin{equation}\label{eq-evolution-equation}
\partial_t u(t) = G(t)u(t) \quad\mbox{with}\quad G(t) = A + B(t)
\end{equation}
for finite times $t\in [0,T]$ in a Banach space $X$. The generator $G(t)$ is composed of a principal static part $A$ and a potentially time-dependent (non-autonomous) perturbation $B(t)$. The conditions for existence of an evolution system (solution operator) $U(t,s)$ are formulated with respect to the domain $D(A)$ of the principal part. Since $A$ is assumed closed, this domain is a Banach space in itself if equipped with the graph norm $\|x\|_{D(A)} = \|x\|_X + \|Ax\|_X$ (\cref{lemma-closed-banach}). In fact we will show with \cref{th-regularity} that if the generator belongs to a (quasi)contraction semigroup and $B$ fulfils higher-order relative boundedness with respect to $A$ (\cref{def-rel-boundedness}) and a Lipschitz property \eqref{eq-lipschitz}, then $U(t,s)$ is a bounded operator $D(A^m) \rightarrow D(A^m)$. This stands in contrast to approaches that study regularity in pre-defined classes, whereas here the respective regularity class is directly yielded by the generator's principal part.

This result in the setting of an $N$-particle Hilbert space $X = L^2(\Omega^N)$ and Schrödinger-type equations leads to the regularity of propagated wave functions in terms of the graph norm of the iterated Laplacian. The perturbation $B(t)$ can be considered as consisting of external or inter-particle potentials of a certain Kato-perturbation type (\cref{def-sobolev-kato-perturbations}). Since the graph norm of $D(\Delta^m)$ is equivalent to the usual Sobolev space norm (\cref{th-sobolev-norm-laplace}), an important result considering the regularity of Schrödinger solutions with respect to Sobolev spaces is achieved (see final \cref{th-se}). The growth estimate for the Sobolev norm over time is of exponential type \eqref{eq-graph-norm-growth}. In the case of potentials that can be assumed smooth in space and time and periodic in space notable results concerning the (linear) growth of Sobolev norms have been achieved by \citet{bourgain-1999} and more recently by \citet{delort-2010}. A very nice recent work aimed at studying the time regularity of solutions to facilitate the Runge--Gross proof of time-dependent density functional theory \citep{tddft-review} can be found in \citet{fournais-2016}.

The existence part of the proof of our main \cref{th-regularity} is similar to the original treatment of \citet{kato-1953}, the usual reference point is \citet[Th.~X.70]{reed-simon-2}, but we give more general conditions that are also easier to check. This is in the spirit of a recent effort by \citet{schmid-griesemer-2014} to simplify and standardise the classical existence results for \eqref{eq-evolution-equation} to the simple condition that $t \mapsto G(t)x$ is continuously differentiable for all $x \in D(G)$. This was followed in \citet{schmid-griesemer-2016} by a generalisation to Lipschitz continuity which already bears some similarities to this work but includes no higher regularity estimates. Further \citet{schmid-griesemer-2016} built their argument around uniformly convex spaces which is, to our understanding, not necessary, since the more general notion of reflexivity is sufficient. Reflexivity is especially convenient because it is conserved when switching to equivalent norms, which is heavily used throughout this work, while uniform convexity is not.

Note that this e-print article includes more detailed explanations, additional proofs, and further references as compared to the published version.

\section{Graph-norm spaces}

The setting is always a Banach space $X$ with norm $\|x\| = \|x\|_X$. Domain and range (image) of an operator will be denoted as $D(A)$ and $R(A)$ respectively. We define the graph norm for a linear operator $A:X \rightarrow X$, generally unbounded, and $x \in D(A)$ as $\|x\|_{D(A)} = \|x\| + \|Ax\|$. This definition yields an equivalent norm to the more frequently given expression for the graph norm as a Pythagorean sum. The inequality
\begin{equation}
\sqrt{\|x\|^2 + \|Ax\|^2} \leq \|x\| + \|Ax\| \leq \sqrt{2}\sqrt{\|x\|^2 + \|Ax\|^2}
\end{equation}
is easily shown to hold by squaring it and using the inequality of arithmetic and geometric means, i.e., $\|x\| \, \|Ax\| \leq \onehalf(\|x\|^2 + \|Ax\|^2)$. The symbol ``$\sim$'' will later be used to denote equivalence of norms.

The domain $D(A^k)$ of the iterated operator is the set of all Banach space elements $x \in X$ where for all $1 \leq j \leq k$ also $A^j x \in X$ holds. (See \citet{fournais-2016} for a special emphasis on this in the context of studying the time-regularity of solutions to the Schrödinger equation with Coulomb potentials.) Note that $D(A^0) = D(I) = X$. Since the main tool of analysis will be higher-order graph-norm spaces $D(A^k)$, we adopt a shorthand notation for their norms and the respective operator norms.
\begin{align}
\label{def-graph-norm}
\|x\|_{(k)} &= \|x\|_{D(A^k)} = \|x\| + \|Ax\| + \ldots + \|A^kx\| \\
\|T\|_{(k,l)} &= \|T\|_{\mathcal{B}(D(A^k),D(A^{l}))} = \sup_{\substack{x \in D(A_k) \\ x \neq 0}} \frac{\|Tx\|_{(l)}}{\|x\|_{(k)}}
\end{align}
The parentheses in the subscript shall discern this notation from the usual $L^p$ and $W^{k,p}$ norms $\|\cdot\|_p$ and $\|\cdot\|_{k,p}$. We directly note the following chain of continuous embeddings.
\begin{equation}
D(A^k) \hookrightarrow D(A^{k-1}) \hookrightarrow \ldots \hookrightarrow D(A) \hookrightarrow X
\end{equation}
In correspondence with spaces equipped with the graph norm the notion of closed operators is all-important.

\begin{definition}\label{def-closed}
A linear and generally unbounded operator $A : X \rightarrow X$ is called \textbf{closed}, if for every sequence $x_n$ in $D(A)$ with $x_n \rightarrow x \in X$ and $Ax_n \rightarrow y \in X$ it holds that $x \in D(A)$ and $y=Ax$. An operator is called \textbf{closable} if it has a closed extension.
\end{definition}

\begin{note}\label{note-graph-closed}
An equivalent notion of closedness is often given in terms of the graph of the operator $\Gamma(A) = \{ (x,Ax) \mid x \in D(A) \} \subset X \times X$. $A$ is now closed if and only if its graph is closed as a subset of $X \times X$. The concept of graph norm is also derived from this picture, as the natural norm for elements of $\Gamma(A)$, and used in the following lemma for a further equivalence. \cite[Def.~B.1]{engel-nagel}
\end{note}

\begin{lemma}\label{lemma-closed-banach}
An operator $A$ is closed if and only if $D(A)$ equipped with the graph norm $\|\cdot\|_{D(A)}$ is a Banach space.
\end{lemma}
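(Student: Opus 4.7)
The plan is a direct two-way argument using the completeness of $X$ and uniqueness of limits.

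For the forward direction, I would assume $A$ closed and take a Cauchy sequence $(x_n)$ in $D(A)$ with respect to the graph norm. Since $\|x_n - x_m\|_{D(A)} = \|x_n - x_m\| + \|Ax_n - Ax_m\|$, both $(x_n)$ and $(Ax_n)$ are Cauchy in $X$. By completeness of $X$ they converge to some $x, y \in X$. Closedness of $A$ then gives $x \in D(A)$ and $Ax = y$, so $\|x_n - x\|_{D(A)} \to 0$, establishing that $(D(A), \|\cdot\|_{D(A)})$ is Banach.

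For the converse, assume $(D(A), \|\cdot\|_{D(A)})$ is Banach and take a sequence $(x_n)$ in $D(A)$ with $x_n \to x$ in $X$ and $Ax_n \to y$ in $X$. Both sequences are Cauchy in $X$, so $(x_n)$ is Cauchy in the graph norm, hence converges in $D(A)$ to some $z$ with $\|x_n - z\|_{D(A)} \to 0$. This forces $x_n \to z$ and $Ax_n \to Az$ in $X$. By uniqueness of limits in $X$, $z = x$ and $Az = y$, so $x \in D(A)$ and $Ax = y$, giving closedness.

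There is no real obstacle here: the argument is bookkeeping on two Cauchy conditions that combine into one graph-norm Cauchy condition, plus one appeal to uniqueness of limits in $X$. The only minor point to be careful about is that the graph-norm definition used in the paper is the sum $\|x\| + \|Ax\|$ rather than the Pythagorean version, but the already-noted equivalence of these two norms (or simply the fact that $\|x_n - x_m\| \le \|x_n - x_m\|_{D(A)}$ and $\|Ax_n - Ax_m\| \le \|x_n - x_m\|_{D(A)}$) makes the Cauchy transfer immediate. I would present the proof as one short paragraph per implication, emphasising that closedness is exactly the statement that graph-norm Cauchy sequences produced by componentwise convergence in $X$ have their limit recognised inside $D(A)$.
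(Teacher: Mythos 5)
Your proposal is correct and follows essentially the same route as the paper's own proof: splitting the graph-norm Cauchy condition into the two componentwise Cauchy conditions in $X$ and using closedness in one direction and completeness plus uniqueness of limits in the other. If anything, your converse direction spells out the identification of the graph-norm limit with $(x,y)$ slightly more explicitly than the paper does, which is a welcome clarification rather than a deviation.
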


\begin{proof}
If $x_n \rightarrow x, Ax_n \rightarrow y$ are sequences as in \cref{def-closed} above then $x_n$ converges in graph norm and thus $x \in D(A), \|x_n-x\| \rightarrow 0$, and $\|Ax_n-Ax\| \rightarrow 0$. This establishes $y=Ax$ and thus closedness of $A$.\\
Starting from the definition of closedness we need to show completeness, i.e., every Cauchy sequence $x_n$ with respect to the graph norm converges in $D(A)$. This means $x_n$ and $Ax_n$ are Cauchy in $X$ and have the limits $x$ and $y$ respectively. Because $A$ is closed this yields $x \in D(A)$ as the proper limit of the sequence in $D(A)$.
\end{proof}

\begin{note}
In this sense a closed operator is always bounded as $A:D(A) \rightarrow X$ because clearly $\|Ax\| \leq \|x\|_{(1)}$ and \textit{closedness} can be seen as a notion of \textit{almost boundedness}. It is not even that simple to come up with examples of non-closable operators.\footnote{See StackExchange Mathematics: \url{https://math.stackexchange.com/questions/1811205} for an operator that even has dense graph, so the closure of the graph is $X \times X$ and really looks nothing like the graph of an operator.}
\end{note}

\begin{lemma}\label{lemma-inverse-closed}
A closed and injective operator has a closed inverse.
\end{lemma}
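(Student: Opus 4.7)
My plan is to reduce the statement to the graph picture of closedness sketched in \cref{note-graph-closed}, since the graph of $A^{-1}$ is essentially the graph of $A$ read backwards. Concretely, the graph of the inverse satisfies
\begin{equation*}
\Gamma(A^{-1}) = \{(y, A^{-1} y) \mid y \in R(A)\} = \{(Ax, x) \mid x \in D(A)\} = \sigma(\Gamma(A)),
\end{equation*}
where $\sigma: X \times X \to X \times X$ is the coordinate flip $(x,y) \mapsto (y,x)$. Injectivity of $A$ is used here to ensure $A^{-1}$ is single-valued on $R(A)$, so that this set is indeed the graph of an operator. Since $\sigma$ is a linear homeomorphism of $X \times X$, it sends closed sets to closed sets; hence $\Gamma(A)$ closed (which is equivalent to $A$ being closed by \cref{note-graph-closed}) implies $\Gamma(A^{-1})$ closed, i.e., $A^{-1}$ is closed.

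For readers who prefer to stay within the sequential formulation of \cref{def-closed}, I would also sketch the direct verification: take any sequence $y_n \in D(A^{-1}) = R(A)$ with $y_n \to y$ in $X$ and $A^{-1} y_n \to x$ in $X$, and set $x_n = A^{-1} y_n \in D(A)$. Then $x_n \to x$ and $A x_n = y_n \to y$, so closedness of $A$ forces $x \in D(A)$ with $A x = y$. This places $y \in R(A) = D(A^{-1})$ and gives $A^{-1} y = x$, which is exactly what \cref{def-closed} requires of $A^{-1}$.

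There is no real obstacle; the only thing worth stressing is the role of injectivity, without which $R(A)$ would only parametrise a closed linear relation rather than the graph of an actual operator. Both of the above lines of argument are short enough to present together, but I would likely favour the graph-symmetry argument as the main text and relegate the sequential check to a parenthetical remark, since it emphasises the conceptual point that ``closed'' is an inherently symmetric property in $(x, Ax)$.
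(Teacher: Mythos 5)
Your main argument is essentially the paper's own proof: both identify $\Gamma(A^{-1})$ with $\Gamma(A)$ read with the coordinates swapped and conclude closedness of the inverse from closedness of the graph, as in \cref{note-graph-closed} (you make the flip map explicit as a homeomorphism, which is a slightly more careful phrasing of the paper's ``isomorphic''). The additional sequential verification via \cref{def-closed} is a correct but redundant supplement.
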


\begin{proof}
Since the restricted operator $A : D(A) \rightarrow R(A)$ is bijective we can define $A^{-1} : R(A) \rightarrow D(A)$. Now the graph $\Gamma(A) = \{ (x,Ax) \mid x \in D(A) \}$ can be rewritten with $y=Ax$ as $\Gamma(A) = \{ (A^{-1}y,y) \mid y \in R(A) \}$ which is isomorphic to $\Gamma(A^{-1})$. Thus from $\Gamma(A)$ closed follows $\Gamma(A^{-1})$ closed which is equivalent to closedness of operators as explained in \cref{note-graph-closed}.
\end{proof}

\begin{definition}\label{def-resolvent-set}
The \textbf{resolvent set} $\rho(A)$ of a closed operator $A$ is the set of all $\lambda \in \C$ such that $A-\lambda I$ has full range $X$ and a bounded inverse $(A-\lambda I)^{-1}$, called the \textbf{resolvent operator}. The elements of $\rho(A)$ are called \textbf{regular values} of $A$, their complement forms the \textbf{spectrum} of the operator.
\end{definition}

\begin{note}\label{note-closed-graph-th}
A popular alternative definition of the resolvent set merely demands a dense range for $A-\lambda I$ without mentioning closedness, see for example \citet[VIII.1]{yosida} or \citet[8.3]{renardy-rogers}. The resulting bounded resolvent operator is then densely defined and can in principle be continuously extended.
Since an operator that is everywhere defined is closed if and only if it is bounded (closed graph theorem), in our setting the resolvent operator is automatically closed. A closed, invertible operator has a closed inverse as shown in \cref{lemma-inverse-closed}, so one directly concludes $A-\lambda I$ and thus also $A$ closed from a non-empty resolvent set if \cref{def-resolvent-set} is used. Without the condition of full range in the case of a non-empty resolvent set the operator is still \textit{closable}. In any case it seems strange that closedness is listed here as an additional condition. This is because if one concentrates only on closed operators, like in our case and also in \citet[IV.1]{engel-nagel} or \citet[III.6.1]{kato-book}, then $R(A-\lambda I)=X$ is just an equivalent condition to a dense range, so the two possible definitions of resolvent sets actually agree again. To keep matters straight in what follows, both \textit{closedness} and \textit{non-empty resolvent set} will be noted as requirements side by side.\footnote{In these matters a discussion on StackExchange Mathematics has been proven very instructive: \url{http://math.stackexchange.com/questions/815377/resolvent-definition}}
\end{note}

\begin{lemma}\label{lemma-iterations-closed-1}
If $A$ is closed with non-empty resolvent set then all its iterations $A^k$, $k \in \mathbb{N}$, are closed.
\end{lemma}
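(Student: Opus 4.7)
The plan is to induct on $k$, the base case $k = 1$ being the hypothesis. For the inductive step I fix any $\lambda \in \rho(A)$ and work with $R = (A - \lambda I)^{-1}$, which is bounded on $X$ by the definition of the resolvent set. The argument will rest on three polynomial identities that follow from $(A - \lambda I)R = I$ on $X$ and $R(A - \lambda I) = I$ on $D(A)$: first, $AR = I + \lambda R$ on $X$; second, its iterate $A^k R^k = (I + \lambda R)^k$ on $X$, together with the dual $R^k A^k = (I + \lambda R)^k$ on $D(A^k)$; and third, the domain statement $R \colon D(A^p) \to D(A^{p+1})$ for every $p \ge 0$, which itself follows by induction on $p$ using the explicit formula $A^p R = \sum_{j=0}^{p-1}\lambda^j A^{p-1-j} + \lambda^p R$.

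To verify closedness I will take $x_n \in D(A^k)$ with $x_n \to x$ and $A^k x_n \to y$ in $X$. The boundedness of $R^k$ and of $(I + \lambda R)^k$ lets me pass to limits in the identity $R^k A^k x_n = (I + \lambda R)^k x_n$, yielding
\[
    (I + \lambda R)^k x = R^k y , \qquad \text{hence} \qquad x = R^k y - \sum_{m=1}^k \binom{k}{m}\lambda^m R^m x.
\]
Every term on the right-hand side already lies in $D(A)$, so $x \in D(A)$ is immediate.

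The substantive step is then a secondary induction on $j$ that bootstraps $x$ from $D(A)$ up to $D(A^k)$. Granting $x \in D(A^{j-1})$, I will apply $A^{j-1}$ to the identity above and rewrite each summand using either $A^{j-1}R^m = (I + \lambda R)^{j-1} R^{m-j+1}$ (when $m \ge j - 1$) or $A^{j-1}R^m = A^{j-1-m}(I + \lambda R)^m$ (when $m < j - 1$). Tracking domains via $R \colon D(A^p) \to D(A^{p+1})$ will show that each resulting summand again lies in $D(A)$, whence $A^{j-1}x \in D(A)$ and $x \in D(A^j)$. Iterating up to $j = k$ places $x$ inside $D(A^k)$, after which $R^k A^k x = (I + \lambda R)^k x = R^k y$ combined with the injectivity of $R^k$ delivers $A^k x = y$.

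The main obstacle is precisely this secondary induction, specifically the bookkeeping required to verify that $A^{j-1} R^m x$ lands in $D(A)$ for every combination of $m$ and $j$, even when $x$ is only known to lie in $D(A^{j-1})$. This is where the domain statement $R \colon D(A^p) \to D(A^{p+1})$ carries its weight, supplying the extra degree of regularity each factor of $R$ buys and making the algebra collapse into $D(A)$ at every stage; the remaining manipulations are a purely mechanical consequence of the resolvent calculus.
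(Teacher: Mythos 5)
Your argument is correct, but it takes a genuinely different route from the paper's. The paper also fixes $\lambda \in \rho(A)$, but then argues structurally: $(A-\lambda I)^{-k}$ is everywhere defined and bounded, hence closed by the closed graph theorem; since a closed injective operator has a closed inverse, $(A-\lambda I)^{k}$ is closed; and the binomial expansion $(A-\lambda I)^{k}=\sum_{j=0}^{k}\binom{k}{j}(-\lambda)^{k-j}A^{j}$ then feeds an induction that passes from $I, A, \dots, A^{k-1}$ and $(A-\lambda I)^{k}$ closed to $A^{k}$ closed. You instead verify the definition of closedness directly through resolvent calculus: from $AR=I+\lambda R$ you obtain $R^{k}A^{k}=(I+\lambda R)^{k}$ on $D(A^{k})$ together with the smoothing property $R\colon D(A^{p})\to D(A^{p+1})$, pass to the limit in $R^{k}A^{k}x_{n}=(I+\lambda R)^{k}x_{n}$ using only bounded operators, bootstrap $x$ from $D(A)$ up to $D(A^{k})$, and finish with injectivity of $R^{k}$ to identify $A^{k}x=y$. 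What your version buys is that every limit is taken through a bounded operator, so you need neither the closed graph theorem nor the closed-inverse lemma, and the step the paper leaves implicit (how exactly $A^{2}$ closed follows from $I$, $A$, $(A-\lambda I)^{2}$ closed, which is not automatic since sums of closed operators need not be closed) is replaced by your explicit domain bootstrap. What it costs is the bookkeeping of your secondary induction, which can in fact be shortened: since $R$ maps $D(A^{p})$ into $D(A^{p+1})$, the identity $x=R^{k}y-\sum_{m=1}^{k}\binom{k}{m}\lambda^{m}R^{m}x$ already upgrades $x\in D(A^{j-1})$ to $x\in D(A^{j})$ for every $j\leq k$ term by term, with no need to apply $A^{j-1}$ and distinguish the cases $m\geq j-1$ and $m<j-1$.
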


\begin{proof}
Take $\lambda \in \rho(A)$ then the iterated resolvent operator $(A-\lambda I)^{-k}$ is everywhere defined and bounded and thus closed (closed graph theorem). Since a closed, invertible operator has a closed inverse (\cref{lemma-inverse-closed}) we also have $(A-\lambda I)^{k}$ closed. This expression expands to
\[
(A-\lambda I)^{k} = \sum_{j=0}^k {k \choose j} (-\lambda)^{k-j} A^{j}
\]
which facilitates an easy induction scheme. From $I, A, (A-\lambda I)^{2}$ closed we arrive at $A^2$ closed and so on.
\end{proof}

\section*{Graph-norm spaces from generators}

This unnumbered section is just an insert for the case of $A$ being the generator of a contraction semigroup. Since this implies that $A$ is closed and has a non-empty resolvent set, the results are in fact less general, but since we are often concerned with such operators they still bear some relevance. We refer to \cref{note-generator} for some comments on semigroups and their generators. Here we will first give an alternative proof for \cref{lemma-iterations-closed-1}, then show in \cref{lemma-equiv-norms} that the graph norm of $D(A^k)$ from \eqref{def-graph-norm} and its alternative definition $\|x\|+\|A^k x\|$ are equivalent.
The result also holds for operators with zero in the resolvent set and it can be derived in a Hilbert space setting for $A$ self-adjoint as well. The main tool in the proof of \cref{lemma-equiv-norms} will be the weighted inequality of arithmetic and geometric means. The famous Stone theorem (\cref{th-stone}) explains that there is actually an intimate relation between contraction semigroups and self-adjoint operators. Conceptually between Hilbert and Banach spaces lie reflexive Banach spaces that are treated in the next section.

\begin{lemma}\label{lemma-iterations-closed-2}
If $A$ is the generator of a contraction semigroup then all its iterations $A^k$, $k \in \mathbb{N}$, are closed.
\end{lemma}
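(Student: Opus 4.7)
The plan is to give a proof that draws directly on the semigroup representation of the resolvent, rather than routing through the abstract assumption that the resolvent set is non-empty. If $A$ generates a contraction semigroup $T(t)$, the Hille--Yosida theorem places $(0,\infty) \subset \rho(A)$ and furnishes, for each $\lambda > 0$, the explicit integral representation
\[
R_\lambda x := (\lambda I - A)^{-1} x = \int_0^\infty e^{-\lambda t} T(t) x \d t,
\]
so that $R_\lambda$ is an everywhere-defined bounded operator on $X$. The first step is to observe that its $k$-fold composition admits the analogous closed-form expression
\[
R_\lambda^k x = \frac{1}{(k-1)!} \int_0^\infty t^{k-1} e^{-\lambda t} T(t) x \d t,
\]
which follows from repeated use of the semigroup law $T(s)T(t) = T(s+t)$ together with Fubini (or equivalently by $(k-1)$-fold differentiation in $\lambda$ under the integral sign). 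This exhibits $R_\lambda^k$ as a bounded operator on $X$ without any iteration on the abstract resolvent.

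Next I would identify the range of $R_\lambda^k$ with $D(A^k)$ and verify that its algebraic inverse is $(\lambda I - A)^k$. This is a short induction in $k$, resting on the standard identities $(\lambda I - A) R_\lambda = I$ on $X$ and $R_\lambda (\lambda I - A) = I$ on $D(A)$ that come bundled with the Hille--Yosida resolvent. Once this identification is in place, $R_\lambda^k \in \mathcal{B}(X)$ is closed by the closed graph theorem, and \cref{lemma-inverse-closed} immediately yields the closedness of $(\lambda I - A)^k$.

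Finally, the binomial expansion
\[
(\lambda I - A)^k = \sum_{j=0}^k \binom{k}{j} (-1)^{k-j} \lambda^{k-j} A^j
\]
combined with the same induction scheme as in \cref{lemma-iterations-closed-1}---closedness of $I, A, \ldots, A^{k-1}$ and of the full sum forcing closedness of the top term $A^k$---establishes the claim. The step I expect to require the most care is the identification of the range and inverse of $R_\lambda^k$, since the integral formula by itself provides boundedness but not the algebraic inverse structure; everything downstream reduces to the same binomial-induction argument already used in \cref{lemma-iterations-closed-1}, so the genuinely new input here is the semigroup-theoretic construction of $R_\lambda^k$ as a manifestly bounded operator.
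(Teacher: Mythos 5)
Your proposal is correct, but it takes a genuinely different route from the paper's proof of this lemma. You invoke Hille--Yosida to place $(0,\infty)\subset\rho(A)$ and then run the resolvent argument: $R_\lambda^k$ bounded and everywhere defined, identification of $R(R_\lambda^k)=D(A^k)$ with algebraic inverse $(\lambda I-A)^k$, closedness of $(\lambda I-A)^k$ via \cref{lemma-inverse-closed}, and the binomial induction. This is essentially the proof of \cref{lemma-iterations-closed-1} specialised to semigroup generators --- the paper itself remarks, just before this lemma, that a contraction-semigroup generator is closed with non-empty resolvent set, so the statement is already covered by \cref{lemma-iterations-closed-1}; what you add that is genuinely worthwhile is the explicit verification that the range of $R_\lambda^k$ is $D(A^k)$ and that its inverse is $(\lambda I-A)^k$, a point \cref{lemma-iterations-closed-1} leaves implicit, whereas the integral representation of $R_\lambda^k$ is decorative (its boundedness is immediate from boundedness of $R_\lambda$). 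The paper's own proof deliberately avoids the resolvent: it uses the Kallman--Rota inequality \eqref{eq-kallman-rota} to derive the interpolation estimate \eqref{eq-kr-Ak}, from which closedness of $A^{k+1}$ follows by a direct Cauchy-sequence argument, showing that $\{A^kx_n\}_n$ converges whenever $x_n$ and $A^{k+1}x_n$ do. The payoff of that route is that \eqref{eq-kr-Ak} is reused immediately in \cref{lemma-equiv-norms} to prove equivalence of $\|x\|_{(k)}$ with $\|x\|+\|A^kx\|$; your route, while valid, does not produce that auxiliary estimate. Note also that your final step inherits the same binomial-induction scheme as \cref{lemma-iterations-closed-1} (lower powers closed plus $(\lambda I-A)^k$ closed forcing $A^k$ closed), so any care needed there is shared with the paper's earlier lemma rather than being an issue specific to your argument.
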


\begin{proof}
That the generator of a strongly continuous semigroup, which is even more general than assuming a contraction semigroup, is closed (and densely defined) is a classical result, see \citet[Th.~12.12]{renardy-rogers}, and will not be repeated here.\\
In a preparatory step we use the \citet{kallman-rota} inequality (a generalised form of the Landau--Kolmogorov inequality, see also \citet{hille-1972})
\begin{equation}\label{eq-kallman-rota}
\|A x\|^{2} \leq 4 \, \|x\|\, \|A^2 x\|
\end{equation}
that holds for any $x \in D(A^2)$ if $A$ is the generator of a \textit{contraction} semigroup. From this we want to prove the following relation for all $k \in \N$ and $x \in D(A^{k+1})$.
\begin{equation}\label{eq-kr-Ak}
\|A^k x\|^{(k+1)/k} \leq 2^{k+1} \|x\|^{1/k}\, \|A^{k+1} x\|
\end{equation}
The case $k=1$ is just \eqref{eq-kallman-rota} and we proceed by induction. We take \eqref{eq-kallman-rota} but replace $x$ by $A^k x$ to get
\begin{equation}
\|A^{k+1} x\|^{2} \leq 4 \, \| A^k x \| \, \|A^{k+2} x\|
\end{equation}
and in the next step use \eqref{eq-kr-Ak} that is assumed to hold to arrive at
\begin{equation}
\|A^{k+1} x\|^2 \leq 2^{k+2} \|x\|^{1/(k+1)}\, \|A^{k+1} x\|^{k/(k+1)} \|A^{k+2} x\|.
\end{equation}
Collecting the $A^{k+1}$ terms on one side we get
\begin{equation}
\|A^{k+1} x\|^{(k+2)/(k+1)} \leq 2^{k+2} \|x\|^{1/(k+1)} \|A^{k+2} x\|.
\end{equation}
which is just the desired result for the case $k+1$.\\
We use induction again to infer from $A, A^k$ closed that $A^{k+1}$ closed. Following \cref{def-closed} we take a sequence $\{x_n\}_n$ in $D(A^{k+1})$ with $x_n \rightarrow x$ and $A^{k+1}x_n \rightarrow y$. Now since both sequences converge they are also Cauchy sequences. If we take $A^k(x_n-x_m)$ then by the estimate \eqref{eq-kr-Ak} above 
\begin{equation}
\|A^k(x_n-x_m)\| \leq 2^{k} \|x_n-x_m\|^{1/(k+1)}\, \|A^{k+1} (x_n-x_m)\|^{k/(k+1)} \longrightarrow 0
\end{equation}
thus $\{A^k x_n\}_n$ is also Cauchy and converges to some $x' \in X$. But now we are able to invoke closedness of $A$ and state that since $A^k x_n \rightarrow x'$ and $A^{k+1}x_n = A(A^{k}x_n) \rightarrow y$ it must hold that $x' \in D(A)$ and $Ax'=y$. Because $A^k$ is closed as well we know from $x_n \rightarrow x$ and $A^k x_n \rightarrow x'$ that $x \in D(A^k)$ and $A^k x = x'$. A combination of these results yields just $x \in D(A^{k+1})$ and $A^{k+1}x=y$ and shows that $A^{k+1}$ is closed.
\end{proof}

\begin{lemma}\label{lemma-equiv-norms}
If $A$ is the generator of a contraction semigroup then the following equivalence of norms on $D(A^k)$ holds.
\begin{equation}
\|x\|_{(k)} \sim \|A^k x\| + \|x\|
\end{equation}
\end{lemma}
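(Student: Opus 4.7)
The easy direction, $\|x\| + \|A^k x\| \leq \|x\|_{(k)}$, is immediate from the definition. All the content of the statement lies in controlling each intermediate norm $\|A^j x\|$ with $1 \leq j \leq k-1$ by only the endpoints $\|x\|$ and $\|A^k x\|$, which is what I would aim at.

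My plan is first to establish a Landau--Kolmogorov type interpolation inequality
\[
\|A^j x\| \leq 2^{j(k-j)} \|x\|^{1-j/k} \|A^k x\|^{j/k} \qquad (0 \leq j \leq k,\; x \in D(A^k)).
\]
The key observation is that substituting $x \mapsto A^i x$ into the Kallman--Rota estimate \eqref{eq-kallman-rota} yields the chain $\|A^{i+1} x\|^2 \leq 4 \|A^i x\| \|A^{i+2} x\|$ for $0 \leq i \leq k-2$. After taking logarithms this says precisely that the shifted sequence $c_i := \log \|A^i x\| + i^2 \log 2$ has non-decreasing forward differences, i.e., is discretely convex on $\{0,1,\ldots,k\}$. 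The chord bound $c_j \leq (1-j/k)\, c_0 + (j/k)\, c_k$ then follows by averaging the differences (an average over a shorter initial segment of a non-decreasing sequence is dominated by the full average) and, after exponentiation, gives the displayed interpolation with explicit constant $2^{j(k-j)}$.

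With the interpolation in hand, I would then apply the weighted AM-GM / Young's inequality $a^{1-j/k} b^{j/k} \leq (1-j/k)\, a + (j/k)\, b \leq a + b$ (exactly the tool announced in the paragraph preceding the lemma) with $a = \|x\|$ and $b = \|A^k x\|$. This converts the multiplicative estimate into the additive bound $\|A^j x\| \leq 2^{j(k-j)}(\|x\| + \|A^k x\|)$. Summing over $0 \leq j \leq k$ and absorbing the $j$-dependent constants into a single $C_k$ produces $\|x\|_{(k)} \leq C_k(\|x\| + \|A^k x\|)$, which together with the trivial inequality gives the equivalence of norms.

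The main obstacle will be the interpolation step: once the correct reparametrisation is spotted, namely the $i^2 \log 2$ shift that promotes the Kallman--Rota inequality into a discrete convexity statement, the convex-combination argument and the subsequent AM-GM conclusion are both routine. If that reparametrisation is not used, one would instead have to iterate \eqref{eq-kallman-rota} (or the already-derived \eqref{eq-kr-Ak}) by hand, which quickly becomes notationally awkward without delivering any conceptual gain. Degenerate cases with $\|x\| = 0$ or $\|A^k x\| = 0$ are handled automatically, since $\|A^k x\| = 0$ propagates downward through the same three-term inequality to force all intermediate norms to vanish.
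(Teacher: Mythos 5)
Your proposal is correct, and it is built from the same two ingredients the paper uses (the Kallman--Rota inequality \eqref{eq-kallman-rota} and the weighted AM--GM step), but the interpolation stage is organised differently. The paper recycles its inductively derived inequality \eqref{eq-kr-Ak}, which directly controls only the top intermediate power $\|A^{k-1}x\|$ by $\|x\|$ and $\|A^k x\|$, and then dismisses the lower powers $\|A^j x\|$, $1\leq j<k-1$, with ``a similar estimate can be derived analogously'' (implicitly a downward recursion $\|A^j x\|\lesssim \|x\|+\|A^{j+1}x\|$). You instead prove the full Landau--Kolmogorov interpolation $\|A^j x\|\leq 2^{j(k-j)}\|x\|^{1-j/k}\|A^k x\|^{j/k}$ for all $j$ in one stroke, by observing that $\|A^{i+1}x\|^2\leq 4\|A^i x\|\,\|A^{i+2}x\|$ makes $c_i=\log\|A^i x\|+i^2\log 2$ discretely convex and then using the chord bound; I checked the shift ($2(i+1)^2-i^2-(i+2)^2=-2$ exactly cancels the $\log 4$) and the resulting exponent $j(k-j)$, and both are right. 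This buys you explicit, uniform constants and actually fills in the step the paper leaves as ``analogous'', at the price of needing the positivity caveat for taking logarithms, which you correctly dispose of by noting that $\|A^k x\|=0$ propagates downward through the three-term inequality (and $\|A^j x\|=0$ with $j<k$ would force $A^k x=0$ anyway). The paper's route avoids logarithms altogether and needs no such case distinction, but is less explicit about the intermediate orders; either argument is a valid proof of the lemma.
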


\begin{proof}
The proof uses the Kallman--Rota inequality again and is very similar to the proof of the previous lemma. That for $x \in D(A^k)$
\begin{equation}
\|x\|_{(k)} \geq \|A^k x\| + \|x\|
\end{equation}
is clear from the definition, so we only have to show
\begin{equation}\label{eq-equiv-norms-estimate}
\|x\|_{(k)} \leq C \left(\|A^k x\| + \|x\|\right)
\end{equation}
for some constant $C>0$. We already showed
\begin{equation}
\|A^{k-1} x\|^{k/(k-1)} = \|A^{k-1} x\|^{1/(k-1) + 1} \leq 2^{k} \|x\|^{1/(k-1)}\, \|A^{k} x\|
\end{equation}
in \eqref{eq-kr-Ak} where $k$ was just lowered by 1, which by the weighted inequality of arithmetic and geometric means \citep[(2.5.1)]{inequalities-book} yields
\begin{equation}
\|A^{k-1} x\| \leq 2^{k-1} \frac{\|x\| + (k-1)\|A^{k} x\|}{k}.
\end{equation}
A similar estimate can be derived analogously for all $\|A^{j} x\|$, $1 \leq j < k-1$, with only multiples of $\|x\|$ and $\|A^{k} x\|$ on the right hand side. This then already establishes the desired estimate \eqref{eq-equiv-norms-estimate}.
\end{proof}

\section{Reflexivity}

In this section reflexivity of Banach spaces is used to great advantage to get the properties of the limits of weakly converging sequences in \cref{lemma-weak-bounded} and \cref{lemma-weak-convergence} (which is actually a weaker version of lemma 5 in \citet{kato-1953} used there on p.~222). To be even able to work with those results later, we first need to establish reflexivity of graph-norm spaces.

\begin{lemma}\label{lemma-reflexive}
For $A:X \rightarrow X$ closed, $D(A)$ equipped with the graph norm is a reflexive Banach space if $X$ is so.\footnote{This was inspired by a short answer on StackExchange Mathematics: \url{http://math.stackexchange.com/questions/107721}.}
\end{lemma}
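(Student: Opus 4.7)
The plan is to realise $(D(A), \|\cdot\|_{D(A)})$ as a closed subspace of the product space $X \times X$ via the graph embedding, and then invoke two standard facts: finite products of reflexive spaces are reflexive, and closed subspaces of reflexive spaces are reflexive.

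Concretely, I would first equip $X \times X$ with the norm $\|(x,y)\| = \|x\| + \|y\|$, which generates the product topology and makes $X \times X$ into a Banach space. Since reflexivity is preserved under finite direct sums (for instance via the identification $(X \times X)^* = X^* \times X^*$ and iterated application of the canonical embedding), $X \times X$ is reflexive whenever $X$ is. Next, consider the map $\Phi : D(A) \to X \times X$, $\Phi(x) = (x, Ax)$. By construction $\|\Phi(x)\| = \|x\| + \|Ax\| = \|x\|_{D(A)}$, so $\Phi$ is a linear isometry onto its image $\Gamma(A)$.

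The closedness of $A$, by the equivalent formulation recalled in \cref{note-graph-closed}, means precisely that $\Gamma(A)$ is a closed subset of $X \times X$. Being a closed linear subspace of a reflexive Banach space, $\Gamma(A)$ is itself reflexive. Finally, reflexivity transfers along isometric isomorphisms, so $(D(A), \|\cdot\|_{D(A)})$ is reflexive as well; completeness has already been established in \cref{lemma-closed-banach}.

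The only conceptual care needed is to justify the two transfer principles. The reflexivity of a closed subspace $Y \subset Z$ of a reflexive $Z$ is standard: the canonical embedding $J_Y : Y \to Y^{**}$ agrees with the restriction of $J_Z$ via Hahn--Banach extension of functionals, and surjectivity of $J_Z$ together with closedness of $Y$ (so that $Y$ is weakly closed) yields surjectivity of $J_Y$. The reflexivity of $X \times X$ reduces to the identifications $(X \times X)^* \cong X^* \times X^*$ and $(X \times X)^{**} \cong X^{**} \times X^{**}$ under which the canonical embedding acts componentwise. I do not anticipate a real obstacle; the argument is essentially the observation that the graph is a closed subspace, with reflexivity then coming for free.
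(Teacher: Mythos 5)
Your proposal is correct and follows essentially the same route as the paper: embed $D(A)$ isometrically onto the graph $\Gamma(A)$ in $X \times X$ with the sum norm, note that closedness of $A$ makes $\Gamma(A)$ a closed subspace, and conclude reflexivity from reflexivity of $X \times X$ and heredity to closed subspaces. Your additional remarks justifying the two transfer principles only make the standard facts used in the paper explicit.
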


\begin{proof}
Take the isometry
\begin{equation}
i : D(A) \longrightarrow X \times X, x \longmapsto (x,Ax)
\end{equation}
where for $x \in D(A)$ we have the usual graph norm $\|x\| + \|Ax\|$ and for $(x_1,x_2) \in X\times X$ the sum norm $\|(x_1,x_2)\| = \|x_1\| + \|x_2\|$. Now $i(D(A))$ is closed in $X \times X$ as the image of a closed set under an isometry\footnote{See StackExchange Mathematics: \url{https://math.stackexchange.com/questions/1519704}.} and $X \times X$ is known to be reflexive if $X$ is. But any closed subspace of a reflexive space if reflexive itself, so $D(A)$ is.
\end{proof}

\begin{lemma}\label{lemma-weak-bounded}
Every weakly convergent sequence in $X$ is bounded.
\end{lemma}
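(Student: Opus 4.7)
The plan is to reduce this to the Banach--Steinhaus uniform boundedness principle by passing to the canonical embedding of $X$ into its double dual $X^{**}$. Let $\{x_n\}_n$ be a weakly convergent sequence in $X$, so that for every bounded linear functional $f \in X^*$ the numerical sequence $f(x_n)$ converges and is therefore bounded. Consider the canonical map $J:X \to X^{**}$ defined by $J(x)(f) = f(x)$, which (as a consequence of the Hahn--Banach theorem) is a linear isometry; in particular $\|J(x_n)\|_{X^{**}} = \|x_n\|_X$ for every $n$.

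The family $\{J(x_n)\}_n \subset X^{**}$ then consists of bounded linear functionals on the Banach space $X^*$, and by the previous observation $\sup_n |J(x_n)(f)| = \sup_n |f(x_n)| < \infty$ for every fixed $f \in X^*$. Since $X^*$ is complete (the dual of any normed space is a Banach space), the Banach--Steinhaus theorem applies to this pointwise bounded family and yields a uniform norm bound $\sup_n \|J(x_n)\|_{X^{**}} < \infty$. Translating this back through the isometry $J$ gives $\sup_n \|x_n\|_X < \infty$, which is the claim.

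The proof is essentially a two-line assembly once the right tools are identified. The only nontrivial content is contained in the invoked results: the Banach--Steinhaus theorem (resting on the Baire category theorem applied to the Banach space $X^*$) and the isometric nature of the canonical embedding $J$ (requiring the Hahn--Banach theorem to produce, for each $x \in X$, a functional $f \in X^*$ with $\|f\| = 1$ and $f(x) = \|x\|$). Neither requires any assumption on $X$ beyond being a Banach space, so in particular no reflexivity is needed here, even though the surrounding section will exploit it in the subsequent lemmas.
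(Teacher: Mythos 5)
Your argument is correct and coincides with the paper's own proof: both pass through the canonical isometric embedding $J:X \to X^{**}$, use pointwise boundedness of $\{f(x_n)\}_n$ for each $f \in X^*$, and invoke the Banach--Steinhaus theorem on the Banach space $X^*$ to conclude $\sup_n \|x_n\| < \infty$. Your remark that no reflexivity is needed here is also consistent with the paper, which states the lemma for a general Banach space $X$.
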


\begin{proof}
If $\{ x_n \}_n \subset X$ is weakly convergent to $x \in X$ then it holds that for all $f \in X^*$ we have $f(x_n) \rightarrow f(x)$ so the sequence $\{ f(x_n) \}_n$ is clearly bounded in $\R$ or $\C$. This means we can find a respective $M_f > 0$ such that $\sup \{ |f(x_n)| \}_n \leq M_f$. But for every $x_n \in X$ we have a corresponding $z_n \in X^{**} \supseteq X$ such that $f(x_n) = z_n(f)$. Thus $\sup \{ |z_n(f)| \}_n \leq M_f$ for all $f \in X^*$ and finally because of the uniform boundedness principle (Banach--Steinhaus theorem) $\|z_n\| = \|x_n\| < \infty$ for all $n \in \N$.
\end{proof}

\begin{lemma}\label{lemma-weak-convergence}
Let $X$ be reflexive and $A:X \rightarrow X$ a closed operator with non-empty resolvent set. If a sequence $x_n \in D(A)$ has $\wlim_{n \rightarrow \infty} x_n = x \in X$ (weak limit) and $\{ \|Ax_n\| \}_n$ bounded then it holds $x \in D(A)$ and $\|Ax\| \leq \limsup_{n \rightarrow \infty} \|Ax_n\|.$
\end{lemma}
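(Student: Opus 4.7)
The plan is to exploit the reflexivity of $D(A)$ established in \cref{lemma-reflexive}, together with the fact that a closed operator is bounded as a map $D(A)\to X$, in order to pull back weak convergence from $X$ to $D(A)$ and then push forward to $X$ again via $A$.

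First, I would note that by \cref{lemma-weak-bounded} the sequence $\{\|x_n\|\}$ is bounded, and by hypothesis $\{\|Ax_n\|\}$ is bounded, so $\{x_n\}$ is a bounded sequence in $D(A)$ with respect to the graph norm. To capture the $\limsup$ on the right-hand side, I would pass to a subsequence, still called $\{x_n\}$ for brevity, along which $\|Ax_n\|$ converges to $L := \limsup_{n\to\infty}\|Ax_n\|$; this preserves the weak limit $x$ in $X$ and the boundedness in $D(A)$.

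Next, since $X$ is reflexive and $A$ is closed, \cref{lemma-reflexive} gives that $D(A)$ equipped with the graph norm is a reflexive Banach space. By the Eberlein--\v{S}mulian theorem, the bounded sequence $\{x_n\}$ in $D(A)$ admits a subsequence $\{x_{n_k}\}$ that converges weakly in $D(A)$ to some $y \in D(A)$. Because the inclusion $D(A)\hookrightarrow X$ is continuous and linear, it is weak-to-weak continuous, so $x_{n_k} \rightharpoonup y$ in $X$ as well. Uniqueness of weak limits in $X$ gives $y = x$, hence $x \in D(A)$.

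For the norm bound, I would observe that $A:D(A)\to X$ is bounded (as already noted, $\|Ax\|\le\|x\|_{(1)}$), hence also weak-to-weak continuous. Therefore $Ax_{n_k} \rightharpoonup Ax$ in $X$, and by weak lower semicontinuity of the norm
\begin{equation*}
\|Ax\| \;\le\; \liminf_{k\to\infty}\|Ax_{n_k}\| \;=\; \lim_{n\to\infty}\|Ax_n\| \;=\; L \;=\; \limsup_{n\to\infty}\|Ax_n\|,
\end{equation*}
where the middle equality uses that $\{\|Ax_{n_k}\|\}$ is a subsequence of the convergent sequence selected in the first step. The main conceptual hurdle is really just the bookkeeping with subsequences: one must first thin the sequence to realise the $\limsup$, and only then extract the weakly convergent subsequence in $D(A)$, so that the weak lower semicontinuity bound lands on $\limsup$ rather than merely on $\liminf$. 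Everything else is a standard combination of reflexivity, boundedness of $A$ on its graph-norm domain, and lower semicontinuity of the norm under weak convergence.
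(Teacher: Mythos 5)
Your proof is correct, and it takes a genuinely different route than the paper's. The paper stays entirely in $X$: it picks $\lambda \in \rho(A)$, works with $\tilde{A}=A-\lambda I$ and its bounded inverse $\tilde{A}^{-1}$, extracts a weakly convergent subsequence of $\tilde{A}x_n$ in $X$ via Eberlein--\v{S}mulian, and then identifies $x = \tilde{A}^{-1}y$ through a duality pairing argument that requires showing $(\tilde{A}^{-1})^* X^*$ is dense in $X^*$; the norm bound is obtained via a Hahn--Banach functional, which is weak lower semicontinuity made by hand. You instead lift the whole problem into the graph-norm space $D(A)$, invoke \cref{lemma-reflexive} to conclude $D(A)$ is itself reflexive, extract the weakly convergent subsequence there, and then push forward along the two bounded maps $D(A)\hookrightarrow X$ and $A:D(A)\to X$, both of which are weak-to-weak continuous; lower semicontinuity of the norm finishes it. Your version is cleaner and exposes the role of \cref{lemma-reflexive} more directly (the paper proves it but does not actually use it in this proof, only later in \cref{th-regularity}). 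Notably, your argument never needs the hypothesis of a non-empty resolvent set --- closedness of $A$ alone suffices to make $D(A)$ a reflexive Banach space --- so it actually establishes a slightly more general statement. The one bookkeeping point you correctly flag, thinning first to realise the $\limsup$ before extracting the weakly convergent sub-subsequence, is indeed what makes the final bound come out as $\limsup$ rather than $\liminf$.
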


\begin{proof}
Choose $\lambda \in \rho(A)$ and $\tilde{A} = A-\lambda I$ then we have $\tilde{A}^{-1}$ bounded and the adjoint $(\tilde{A}^{-1})^*$ also exists as a bounded operator $X^* \rightarrow X^*$. The triangle inequality yields $\|\tilde{A}x_n\| \leq \|Ax_n\| + |\lambda| \|x_n\|$ so since $\{ \|Ax_n\| \}_n$ bounded and using \cref{lemma-weak-bounded} we have $\{ \|\tilde{A}x_n\| \}_n$ bounded (i.e., inside some closed ball that is compact in the weak topology iff $X$ reflexive, see \citet[V.4.2]{conway}). This means there must be a weakly convergent subsequence $\{ \tilde{x}_n \}_n$ with $y = \wlim_{n \rightarrow \infty} \tilde{A}\tilde{x}_n \in X$ (Eberlein--Šmulian theorem, a generalization of Bolzano--Weierstraß, see \citet[p.~141]{yosida}). We thus have with the dual pairing written as $(\cdot,\cdot)$ for any $f \in X^*$
\begin{equation}\label{lemma-reflexive-1}
\left( (\tilde{A}^{-1})^*f, \tilde{A}\tilde{x}_n \right) \longrightarrow \left( (\tilde{A}^{-1})^*f, y \right) = \left( f, \tilde{A}^{-1}y \right).
\end{equation}
On the other hand starting with the same expression we get
\begin{equation}\label{lemma-reflexive-2}
\left( (\tilde{A}^{-1})^*f, \tilde{A}\tilde{x}_n \right) = \left( f, \tilde{A}^{-1}\tilde{A}\tilde{x}_n \right) = \left( f, \tilde{x}_n \right) \longrightarrow \left( f, x \right).
\end{equation}
Now $(\tilde{A}^{-1})^* X^*$ is dense in $X^*$ because otherwise we can find a $0 \neq z \in X \simeq X^{**}$ (reflexive Banach space) such that for all $f \in X^*$
\begin{equation}
0 = \left( (\tilde{A}^{-1})^*f,z \right) = \left( f,\tilde{A}^{-1}z \right)
\end{equation}
which implies $\tilde{A}^{-1}z = 0$, thus $z=0$ which leads to a contradiction. With $(\tilde{A}^{-1})^* X^*$ dense in $X^*$ and the two expression \eqref{lemma-reflexive-1} and \eqref{lemma-reflexive-2} holding true for all $f \in X^*$ we identify $x = \tilde{A}^{-1}y$ which clearly means $x \in D(\tilde{A}) = D(A)$ and $\wlim_{n \rightarrow \infty} \tilde{A}\tilde{x}_n = \tilde{A}x$. Since $\wlim_{n \rightarrow \infty} x_n = x$ by assumption we also have $\wlim_{n \rightarrow \infty} A\tilde{x}_n = Ax$.\\
The dual space always contains an element $g \in X^*$ that yields $(g,Ax) = \|Ax\|$ and has dual norm 1 (Hahn--Banach theorem), so $|(g,A\tilde{x}_n)| \leq \|A\tilde{x}_n\|$. We know $(g,A\tilde{x}_n) \longrightarrow (g,Ax) = \|Ax\|$ and so finally $\|Ax\| \leq \limsup_{n \rightarrow \infty} \|A\tilde{x}_n\| \leq \limsup_{n \rightarrow \infty} \|Ax_n\|$.
\end{proof}

%
%

\section{Relative boundedness and a preparatory lemma}

A final preparatory lemma will help us to establish the desired regularity result in \cref{sec-regularity} and gives the main condition on the non-autonomous perturbation $B(t)$ in the form of higher-order relative boundedness with respect to $A$ and a Lipschitz property. The assumption that all $A^k$ are closed is imperative to have Banach spaces $D(A^k)$ from \cref{lemma-closed-banach}. In our main \cref{th-regularity} we will then not have this condition explicitly, but rely on \cref{lemma-iterations-closed-1} to get closedness. We now define the new notion of \textit{$A$-boundedness of order $m$}, a generalization of the usual \textit{$A$-boundedness} (\citet[X.2]{reed-simon-2}, \citet[IV.1.1]{kato-book}) that corresponds to order $k=1$.

\begin{definition}\label{def-rel-boundedness}
Let $A,B$  be densely defined operators and $A$ closed, then $B$ is called \textbf{$A$-bounded of order $m$} if
\begin{enumerate}[(i)]
	\itemsep0em
	\item $D(A) \subseteq D(B)$ and
	\item there are $a,b \geq 0$ such that for all $1 \leq k \leq m$ and $x \in D(A^k)$ it holds that $\|B x\|_{(k-1)} \leq a \|A^k x\| + b\|x\|_{(k-1)}$.
\end{enumerate}
The infimum of possible values for $a$ is called the \textbf{relative bound} (of order $m$) of $B$ with respect to $A$. This relative bound can be as low as 0, the operator is then called \textbf{infinitesimally small}.
\end{definition}

\begin{note}\label{note-B-bounded}
From this definition it directly follows that if $B$ is $A$-bounded of order $m$, then for all $0 \leq k \leq m$ we have $B : D(A^k) \rightarrow D(A^{k-1})$ bounded with respect to the $\|\cdot\|_{(k,k-1)}$ operator norm. Thus this property does not need to be separately demanded in the following lemma. Note further that the usual $A$-boundedness of $B$ with relative bound strictly smaller than $1$ also means that $A+B$ is closed if and only if $A$ is. \citet[Th.~IV.1.1]{kato-book} Another very similar application of relative boundedness is the famous Kato--Rellich theorem that enters \cref{th-kato} here. A further consequence expressed in the following lemma is that the spaces $D(A^k)$ and $D(G^k)$, where $G=A+B$, become equivalent.
\end{note}

\begin{lemma}\label{lemma-AG-equiv}
Assume $A$ closed with non-empty resolvent set and $B$ to be $A$-bounded of order $m$ with relative bound $<1$. Then for $G=A+B$ it holds $D(G^k) = D(A^k)$ and the respective graph norms are equivalent for all $1\leq k \leq m$.
\end{lemma}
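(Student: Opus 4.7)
The approach is induction on $k$. The base case $k=1$ is the classical Kato--Rellich theorem (essentially \cref{note-B-bounded}): the relative bound $a<1$ yields the two-sided estimate
\[
(1-a)\|Ax\| - b\|x\| \;\leq\; \|Gx\| \;\leq\; (1+a)\|Ax\| + b\|x\|
\]
on $D(A)$, hence $\|\cdot\|_{(1)}\sim\|\cdot\|'_{(1)}$, with closedness of $G$ on $D(A)=D(G)$ following from a standard Cauchy-sequence argument based on the inverted form $\|Ax\|\leq(1-a)^{-1}(\|Gx\|+b\|x\|)$.

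For the inductive step assume $\|\cdot\|_{(j)}\sim\|\cdot\|'_{(j)}$ on $D(A^j)=D(G^j)$ for $j<k$. The forward inclusion $D(A^k)\subseteq D(G^k)$ with $\|x\|'_{(k)}\leq C\|x\|_{(k)}$ is routine: expand $G^j=(A+B)^j$ as a sum of $2^j$ words $c_1\cdots c_j$ with $c_i\in\{A,B\}$; by condition (ii) of \cref{def-rel-boundedness}, each single letter maps $D(A^\ell)\to D(A^{\ell-1})$ boundedly. Iterating $j\leq k$ times starting from $x\in D(A^k)$ puts each product in $D(A^{k-j})$ with $\|c_1\cdots c_j x\|\leq C\|x\|_{(k)}$, giving $\|G^jx\|\leq C_j\|x\|_{(k)}$; the case $j=k-1$ puts $G^{k-1}x$ in $D(A)=D(G)$, so $x\in D(G^k)$.

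The backward inclusion $D(G^k)\subseteq D(A^k)$ with the reverse norm bound is the crux. The plan is to decompose $G^k=A^k+R_k$ on $D(A^k)$, where $R_k$ collects the $(2^k-1)$ words in the expansion that contain at least one~$B$, and to establish that $R_k$ is $A^k$-bounded with relative bound strictly less than~$1$. An abstract Kato--Rellich argument applied at level~$k$ then yields closedness of $A^k+R_k$ on $D(A^k)$ together with $\|\cdot\|_{(k)}\sim\|A^k\cdot\|+\|\cdot\|_{(k-1)}$; combined with the inductive hypothesis at level $k-1$, this is equivalent to $\|\cdot\|'_{(k)}$. Since $G^k$ is itself closed on $D(G^k)$ (the closed $G$ inherits a non-empty resolvent set from the base-case perturbation analysis, so \cref{lemma-iterations-closed-1} applies to $G^k$) and agrees with $A^k+R_k$ on $D(A^k)$, a density argument for $D(A^k)$ inside $D(G^k)$ in the $G$-graph norm upgrades the forward inclusion to equality $D(G^k)=D(A^k)$, and the norm equivalence follows.

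The main obstacle is establishing the estimate $\|R_kx\|\leq\alpha\|A^kx\|+\beta\|x\|_{(k-1)}$ with $\alpha<1$. A naive term-by-term bound produces $\alpha=(1+a)^k-1$, which is $<1$ only under $a<2^{1/k}-1$, strictly more restrictive than the hypothesis $a<1$. Sharpening to the full range requires, for each word $c_1\cdots c_k\in R_k$, applying condition (ii) at the largest admissible order at each $B$-position---so that each $B$-letter contributes only one factor of $a$ (not $a+b$) to the coefficient of $\|A^kx\|$---together with iterating the base-case inversion $\|Ay\|\leq(1-a)^{-1}(\|Gy\|+b\|y\|)$ at intermediate steps to absorb residual terms into $\|x\|_{(k-1)}$. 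This bookkeeping is where the strict inequality $a<1$ is fully exploited.
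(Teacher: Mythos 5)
Your base case and the forward inclusion are fine, but the backward inclusion --- which you yourself call the crux --- is not actually proved, and the sketched repair is unconvincing. Your target estimate $\|R_kx\|\leq\alpha\|A^kx\|+\beta\|x\|_{(k-1)}$ with $\alpha<1$ does not follow from the bookkeeping you describe: even if every $B$-letter in a word is charged only a single factor $a$ against $\|A^kx\|$ (which is exactly what careful use of condition (ii) of \cref{def-rel-boundedness} gives, e.g.\ $\|ABx\|\leq\|Bx\|_{(1)}\leq a\|A^2x\|+b\|x\|_{(1)}$ and $\|BAx\|\leq a\|A^2x\|+b\|Ax\|$), the $2^k-1$ words still sum to the coefficient $(1+a)^k-1$, i.e.\ the ``naive'' bound is already the optimal output of this strategy; already for $k=2$ the cross terms $AB$ and $BA$ alone contribute roughly $2a>1$ once $a>1/2$, and the hypotheses provide no cancellation between words. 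Your proposed sharpening via the inversion $\|Ay\|\leq(1-a)^{-1}(\|Gy\|+b\|y\|)$ makes matters worse rather than better: it injects factors $(1-a)^{-1}$ that blow up as $a\to1$ and converts $A$-terms into $G$-terms, which have no place in a relative bound of $R_k$ with respect to $A^k$. A secondary gap: the final ``density argument'' asserting that $D(A^k)$ is a core for $G^k$ (so that closedness of $A^k+R_k$ on $D(A^k)$ forces $D(G^k)=D(A^k)$) is stated without proof and is itself nontrivial; it would need, e.g., a resolvent argument showing $(G-\lambda)^{-1}$ preserves the $D(A^j)$ scale, which is essentially the content of \cref{lemma-K-bounded} and not available at this point.

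The paper avoids the combinatorial expansion of $G^k$ altogether, and this is the essential difference. In the inductive step for $x\in D(G^k)$ it writes $A=G-B$ once, at the level of the $(k-1)$-graph norm:
\begin{equation*}
\|Ax\|_{(k-1)}\;\leq\;\|Gx\|_{(k-1)}+\|Bx\|_{(k-1)}\;\leq\;\|Gx\|_{(k-1)}+a\|A^kx\|+b\|x\|_{(k-1)},
\end{equation*}
so that only a single factor $a<1$ ever multiplies $\|A^kx\|$ and can be absorbed into the left-hand side, yielding $(1-a)\|x\|_{(k)}\leq\|Gx\|_{(k-1)}+\|x\|+b\|x\|_{(k-1)}$; the right-hand side is comparable to the graph norm of $D(G^k)$ because the inductive hypothesis $D(G^{k-1})=D(A^{k-1})$ with equivalent norms identifies $\|x\|+\|Gx\|_{(k-1)}$ with $\|x\|_{D(G^k)}$ up to a constant. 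The forward direction is handled symmetrically with $\|Gx\|_{(k-1)}\leq\|Ax\|_{(k-1)}+a\|A^kx\|+b\|x\|_{(k-1)}\leq(1+a+b)\|x\|_{(k)}$. If you want to salvage your write-up, replace the $G^k=A^k+R_k$ decomposition and the level-$k$ Kato--Rellich step by this single-application-of-$B$ comparison; as it stands, the key relative-bound claim for $R_k$ is unsupported on the full range $a<1$.
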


\begin{proof}
We start with $k=1$ and assume $x\in D(A)$, then $A$-boundedness yields
\begin{equation}\label{eq-AG-equiv-1}
\|Gx\| \leq \|Ax\| + \|Bx\| \leq (1+a)\|Ax\| + b\|x\|
\end{equation}
and thus $x\in D(G)$. On the other hand assume $x\in D(G)$, then one has from $A=G-B$
\begin{equation}
\|Ax\| \leq \|Gx\| + \|Bx\| \leq \|Gx\| + a\|Ax\| + b\|x\|
\end{equation}
and because of $a<1$ by assumption
\begin{equation}\label{eq-AG-equiv-2}
\|Ax\| \leq \frac{1}{1-a}\|Gx\| + \frac{b}{1-a}\|x\|
\end{equation}
and thus $x\in D(A)$. This establishes $D(G)=D(A)$ and further by the inequalities \eqref{eq-AG-equiv-1} and \eqref{eq-AG-equiv-2} equivalence of the norms.\\
Next we proceed by induction and assume the argument holds for $k-1$. Let $x\in D(A^k)$ then
\begin{equation}
\begin{aligned}\label{eq-AG-equiv-3}
\|Gx\|_{(k-1)} &\leq \|Ax\|_{(k-1)} + \|Bx\|_{(k-1)} \leq \|Ax\|_{(k-1)} + a\|A^k x\| + b\|x\|_{(k-1)} \\
&\leq (1+a+b) \|x\|_{(k)}.
\end{aligned}
\end{equation}
Since we already established $D(G^{k-1})=D(A^{k-1})$ with equivalent norms the expression $\|Gx\|_{(k-1)} + \|x\|$ corresponds to the graph norm of $D(G^{k})$ (up to a multiplicative constant). On the other hand let $x\in D(G^k)$ then
\begin{equation}
\begin{aligned}
\|A^k x\| + \|x\|_{(k-1)} - \|x\| = \|Ax\|_{(k-1)} &\leq \|Gx\|_{(k-1)} + \|Bx\|_{(k-1)} \\
&\leq \|Gx\|_{(k-1)} + a\|A^k x\| + b\|x\|_{(k-1)}
\end{aligned}
\end{equation}
and after a simple manipulation
\begin{equation}\label{eq-AG-equiv-4}
(1-a) \|x\|_{(k)} \leq (1-a)\|A^k x\| + \|x\|_{(k-1)} \leq \|Gx\|_{(k-1)} + \|x\| + b\|x\|_{(k-1)}.
\end{equation}
The right hand side corresponds again to the graph norm of $D(G^{k})$ and thus by inequalities $\eqref{eq-AG-equiv-3}$ and $\eqref{eq-AG-equiv-4}$ we get $D(G^{k})=D(A^{k})$ and equivalence of the respective norms.
\end{proof}

\begin{lemma}\label{lemma-K-bounded}
Let $G(t)=A+B(t)$ be closed with $0 \in \rho(G(t))$ at all times under consideration. Further assume $A^k$ to be closed for all $1\leq k\leq m$, $B(t)$ to always be $A$-bounded of order $m$ with a maximal relative bound strictly smaller than $1$, and demand the Lipschitz condition
\begin{equation}
L_m := \sup_{t'\neq t} \sum_{k=1}^m \frac{\|B(t')-B(t)\|_{(k,k-1)}}{|t'-t|} < \infty.
\end{equation}\label{eq-lipschitz}
Then $G(t')^m G(t)^{-m} = K_m(t',t) + I$ with the operator $K_m(t',t) : X \rightarrow X$ bounded by $C_m L_m |t'-t|$ if $t'\neq t$.
(See \eqref{eq-def-Cm} for a definition of the $G$-dependent constants $C_m$.)
\end{lemma}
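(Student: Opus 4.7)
The plan is to rewrite
\begin{equation*}
K_m(t',t) = G(t')^m G(t)^{-m} - I = \bigl(G(t')^m - G(t)^m\bigr) G(t)^{-m}
\end{equation*}
and to split the difference via the telescoping identity
\begin{equation*}
G(t')^m - G(t)^m = \sum_{j=0}^{m-1} G(t')^j \bigl(G(t')-G(t)\bigr) G(t)^{m-1-j},
\end{equation*}
valid on $D(A^m) = D(G(t)^m) = D(G(t')^m)$ by \cref{lemma-AG-equiv}. Inserting $G(t')-G(t) = B(t')-B(t)$ and using $G(t)^{m-1-j} G(t)^{-m} = G(t)^{-(j+1)}$ on $X$ produces the working decomposition
\begin{equation*}
K_m(t',t) = \sum_{j=0}^{m-1} G(t')^j \bigl(B(t')-B(t)\bigr) G(t)^{-(j+1)}.
\end{equation*}

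Next I would bound each summand as an operator $X\to X$ by tracing the domains factor by factor. Since $0\in\rho(G(t))$, the operator $G(t)^{-(j+1)}$ is bounded $X \to D(G(t)^{j+1})$, and by \cref{lemma-AG-equiv} this target coincides with $D(A^{j+1})$ up to equivalent graph norms. The difference $B(t')-B(t)$ then maps $D(A^{j+1}) \to D(A^j)$ with operator norm exactly $\|B(t')-B(t)\|_{(j+1,j)}$, and finally $G(t')^j$ maps $D(A^j) = D(G(t')^j)$ boundedly into $X$. Multiplying the three norms, summing over $j$, and using the Lipschitz hypothesis in the form
\begin{equation*}
\sum_{j=0}^{m-1}\|B(t')-B(t)\|_{(j+1,j)} \le L_m|t'-t|
\end{equation*}
yields the bound $\|K_m(t',t)\| \le C_m L_m|t'-t|$, where $C_m$ collects the resolvent norms, the iterated-generator norms, and the norm-equivalence constants (this is where the precise definition in \eqref{eq-def-Cm} enters).

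The main obstacle is organizational rather than conceptual: one must verify that every intermediate expression lands in the correct graph-norm space before the next factor is applied, and then that all remaining prefactors can be absorbed into a single constant $C_m$. The latter is afforded by the fact that the equivalence constants in \cref{lemma-AG-equiv} depend only on the uniform relative bound $a<1$ and constant $b$ of \cref{def-rel-boundedness}, both of which are assumed uniform in $t$, together with the standing hypothesis that $0\in\rho(G(t))$ for all $t$ under consideration.
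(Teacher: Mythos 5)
Your proposal is correct and takes essentially the same route as the paper: your telescoped decomposition $K_m(t',t)=\sum_{j=0}^{m-1}G(t')^{j}\bigl(B(t')-B(t)\bigr)G(t)^{-(j+1)}$ is exactly the closed form of the paper's iteration $K_k = G'^{\,k-1}(B'-B)G^{-k}+K_{k-1}$, and the resulting three-factor operator-norm bound combined with the Lipschitz hypothesis is the same estimate \eqref{eq-def-Cm}. The only minor difference is bookkeeping: you obtain $G(t)^{-(j+1)}:X\to D(A^{j+1})$ bounded from $0\in\rho(G(t))$ together with \cref{lemma-AG-equiv}, while the paper derives this mapping property directly from the relative-boundedness estimate; both are valid and rest on the same hypotheses.
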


\begin{proof}
In the whole proof we mostly write $G=G(t), G'=G(t')$ and similarly for $B$ for brevity. For all $1 \leq k \leq m$ clearly $A : D(A^k) \rightarrow D(A^{k-1})$ bounded by definition and $B$ bounded equally by assumption (\cref{note-B-bounded}). This makes the combined operator $G: D(A^k) \rightarrow D(A^{k-1})$ bounded as well. Because of zero in the resolvent set we have a well-defined and bounded $G^{-1} : X \rightarrow X$. We continue with the restriction of $G^{-1}$ on $D(A^{k-1})$ that will still be denoted $G^{-1}$. Because $B$ is $A$-bounded of order $m$ we get for all $x \in D(A^{k-1}) =D(G^{k-1})$, $G^{-1}x \in D(G^{k}) = D(A^{k})$ ({\cref{lemma-AG-equiv}})
\begin{equation}
\begin{aligned}
\|A^k G^{-1}x\| &= \|A^{k-1} (G-B) G^{-1}x\| \\
&\leq \|B G^{-1}x\|_{(k-1)} + \|x\|_{(k-1)}\\
&\leq a \|A^k G^{-1}x\| + b\|G^{-1}x\|_{(k-1)} + \|x\|_{(k-1)}
\end{aligned}
\end{equation}
and since $a<1$ it holds
\begin{equation}
\|A^k G^{-1}x\| \leq \frac{b}{1-a} \|G^{-1}x\|_{(k-1)} + \frac{1}{1-a} \|x\|_{(k-1)}.
\end{equation}
This result establishes an estimate
\begin{equation}
\begin{aligned}
\|G^{-1}x\|_{(k)} &= \|G^{-1}x\|_{(k-1)} + \|A^k G^{-1}x\| \\
&\leq \left(\frac{b}{1-a}+1\right) \|G^{-1}x\|_{(k-1)} + \frac{1}{1-a} \|x\|_{(k-1)}
\end{aligned}
\end{equation}
and after $k-1$ further iterations yields $G^{-1} : D(A^{k-1}) \rightarrow D(A^k)$ bounded. 
The iterated operators $G^k : D(A^k) \rightarrow X$, $G^{-k} : X \rightarrow D(A^k)$ are thus bounded as well, just like $K_k(t',t) : X \rightarrow X$ defined by $K_k(t',t) := G(t')^k G(t)^{-k} - I$. We still have to show the special bound for $K_m$.
We start with $k=1$.
\begin{equation}
K_1 = G'G^{-1}-I = (G'-G)G^{-1} = (B'-B)G^{-1}
\end{equation}
The operator $K_1$ has the bound
\begin{equation}
\|K_1\|_{(0,0)} = \|(B'-B)G^{-1}\|_{(0,0)} \leq \|B'-B\|_{(1,0)} \|G^{-1}\|_{(0,1)}.
\end{equation}
Now from $k=1$ on we proceed by iteration.
\begin{equation}
K_k = G'^kG^{-k}-I = G'^{k-1}(G'-G)G^{-k} + K_{k-1} = G'^{k-1}(B'-B)G^{-k} + K_{k-1}
\end{equation}
The estimate is now
\begin{equation}
\begin{aligned}
\|K_k\|_{(0,0)} &\leq \|G'^{k-1}(B'-B)G^{-k}\|_{(0,0)} + \|K_{k-1}\|_{(0,0)} \\
&\leq \|G'^{k-1}\|_{(k-1,0)} \|B'-B\|_{(k,k-1)} \|G^{-k}\|_{(0,k)} + \|K_{k-1}\|_{(0,0)}
\end{aligned}
\end{equation}
which sums up to
\begin{equation}\label{eq-def-Cm}
\begin{aligned}
\|K_m\|_{(0,0)} &\leq \sum_{k=1}^m \|G'^{k-1}\|_{(k-1,0)} \|B'-B\|_{(k,k-1)} \|G^{-k}\|_{(0,k)} \\
&\leq \max_{1 \leq k \leq m} \left\{ \|G'^{k-1}\|_{(k-1,0)} \|G^{-k}\|_{(0,k)} \right\} \times \sum_{k=1}^m \frac{\|B'-B\|_{(k,k-1)}}{|t'-t|} \times |t'-t| \\
&\leq \underbrace{\sup_{t'\neq t}\max_{1 \leq k \leq m} \left\{ \|G'^{k-1}\|_{(k-1,0)} \|G^{-k}\|_{(0,k)} \right\}}_{C_m} \times \underbrace{\sup_{t'\neq t}\sum_{k=1}^m \frac{\|B'-B\|_{(k,k-1)}}{|t'-t|}}_{L_m} \times |t'-t|.
\end{aligned}
\end{equation}
\end{proof}


\section{Regularity result}
\label{sec-regularity}

First we will stick to a pure Banach space setting even though the strongest motivation is of course the Schrödinger equation that will be discussed in \cref{sec-se}. The domain of the time-dependent generator $G(t)$ is usually assumed to remain constant in time and we write $D(G(t)) = D(G)$. This is true anyway by assumption following \cref{lemma-K-bounded} where $D(G(t))=D(A)$ always holds.


\begin{definition}\label{def-evolution-system}
An \textbf{evolution system} belonging to an evolution equation like \eqref{eq-evolution-equation} is a two-parameter family of bounded linear operators $U(t,s)$, $0 \leq s \leq t \leq T$, on $X$ that fulfils \citep[ch.~5, Def.~5.3]{pazy}
\begin{enumerate}[(i)]
  \itemsep0em
  \item $U(s,s) = I, U(t,r)U(r,s)=U(t,s)$ for all $0 \leq s \leq r \leq t \leq T$,
  \item $(t,s) \mapsto U(t,s)$ is jointly strongly continuous, i.e., $\lim_{t\rightarrow s}U(t,s)=I$ strongly and equivalently for $s \rightarrow t$,
\end{enumerate}
and on $D(G)$ solves
\begin{equation}
\begin{aligned}
  \partial_t U(t,s) &= G(t)U(t,s) \\
  \partial_s U(t,s) &= -U(t,s)G(s).
\end{aligned}
\end{equation}
\end{definition}

\begin{note}\label{note-unitary-evolution-system}
In the Hilbert space setting with $H(t)=\i G(t)$ self-adjoint (i.e., $G(t)$ is \textit{skew-adjoint}) that will be discussed in \cref{sec-se} the unitarity condition $U(t,s)^* = U(t,s)^{-1} = U(s,t)$ is added to the properties of the evolution system.
\end{note}

\begin{note}\label{note-generator}
In the case of a time-independent generator $G$ the typical setting is that of a strongly continuous one-parameter semigroup \citep{renardy-rogers,engel-nagel,pazy}. The generator of a strongly continuous semigroup is always densely defined and closed (\citet[Th.~12.12]{renardy-rogers}; this was already noted in the proof of \cref{lemma-iterations-closed-2}). A quasicontraction semigroup $U(t)$ is a strongly continuous semigroup with a constant $\omega > 0$ such that for all $x \in X$ and $t \geq 0$
\begin{equation}
\|U(t)x\| \leq \e^{\omega t}\|x\|.
\end{equation}
Clearly in the Hilbert space setting with a unitary (semi-)group it always holds $\|U(t)x\| \leq \|x\|$ which makes $U(t)$ trivially a contraction semigroup with $\omega=0$. The generator of a quasicontraction semigroup is known to have all $\lambda \in \C$ with $\mathrm{Re}\,\lambda > \omega$ in the resolvent set as a corollary to the famous Hille--Yosida generation theorem \citep[Cor.~II.3.6]{engel-nagel}. Thus $G$ as such a generator just needs to be shifted by $\omega+1$ to have $0 \in \rho(G)$ as demanded by \cref{lemma-K-bounded}. What follows is our main result regarding the regularity property of evolution systems.
\end{note}

\begin{theorem}\label{th-regularity}
On $X$ a reflexive Banach space let $G(t)=A+B(t)$ be the generator of a quasicontraction semigroup for all $t \in [0,T]$ with common contraction constant $\omega$, let $A$ be closed with non-empty resolvent set, and let $B$ have the properties of \cref{lemma-K-bounded}.
Then the evolution equation \eqref{eq-evolution-equation} has a well-defined evolution system given by the limit of the stepwise static approximation \eqref{eq-stepwise-approx} that is bounded $D(A^k) \rightarrow D(A^k)$ for all $0 \leq k \leq m$ and thus preserves regularity of Banach space vectors in the class $D(A^m)$.
\end{theorem}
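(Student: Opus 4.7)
The plan is to construct $U(t,s)$ as the strong limit of a stepwise static approximation and then to propagate $D(A^m)$-regularity through this limit by invoking reflexivity of the graph-norm space (Lemma~\ref{lemma-reflexive}). Partitioning $[s,t]$ into $n$ equal subintervals of width $\Delta=(t-s)/n$ with endpoints $s=t_0<t_1<\ldots<t_n=t$, I would set
\begin{equation*}
U_n(t,s) = \e^{\Delta G(t_{n-1})}\e^{\Delta G(t_{n-2})}\cdots\e^{\Delta G(t_0)},
\end{equation*}
each factor being the quasicontraction semigroup generated by $G(t_j)$, so $\|U_n(t,s)\|_{(0,0)}\le \e^{\omega(t-s)}$ uniformly in $n$.

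The decisive step is a uniform-in-$n$ bound on $\|U_n(t,s)x\|_{(m)}$ for $x\in D(A^m)$. Since $G(t_j)$ commutes with its own semigroup on $D(G(t_j)^m)=D(A^m)$ (Lemma~\ref{lemma-AG-equiv}), powers of $G$ can be pushed past each exponential at the cost of a factor $G(t_{j+1})^m G(t_j)^{-m}=I+K_m(t_{j+1},t_j)$, whose $X\to X$ norm is bounded by $1+C_m L_m\Delta$ via Lemma~\ref{lemma-K-bounded}. Telescoping this commutation rightward across the $n$ exponentials gives
\begin{equation*}
\|G(t)^m U_n(t,s)x\|\le \e^{\omega(t-s)}\prod_{j=0}^{n-1}(1+C_m L_m\Delta)\|G(s)^m x\|\le \e^{(\omega+C_m L_m)(t-s)}\|G(s)^m x\|,
\end{equation*}
which by Lemma~\ref{lemma-AG-equiv} yields $\|U_n(t,s)x\|_{(m)}\le M\|x\|_{(m)}$ with $M$ independent of $n$. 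For existence of the limit itself, a Kato-style telescoping shows $\{U_n(t,s)x\}_n$ is Cauchy in $X$ for $x\in D(G)$, the differences $U_{n+1}-U_n$ being controlled by the Lipschitz hypothesis \eqref{eq-lipschitz}; the uniform $X$-bound and density of $D(G)$ then extend the limit $U(t,s)x$ to all of $X$ and verify the evolution-system axioms of Definition~\ref{def-evolution-system}.

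Finally, for $x\in D(A^m)$ the uniform $D(A^m)$-bound says that $\{A^k U_n(t,s)x\}_n$ is bounded in $X$ for each $1\le k\le m$, while $U_n(t,s)x\to U(t,s)x$ strongly and hence weakly in $X$. Applying Lemma~\ref{lemma-weak-convergence} with $A^k$ in place of $A$ (legitimate because $A^k$ is closed with non-empty resolvent set by Lemma~\ref{lemma-iterations-closed-1}) for each $k=1,\ldots,m$ gives $U(t,s)x\in D(A^k)$ together with $\|A^k U(t,s)x\|\le \limsup_n \|A^k U_n(t,s)x\|$, so the uniform bound passes to the limit and produces the claimed boundedness $D(A^k)\to D(A^k)$ for every $0\le k\le m$. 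The main obstacle is the telescoping in step two: one must verify that each $\e^{\Delta G(t_j)}$ genuinely restricts to a bounded operator on $D(A^m)$ and order the commutations so that the product collapses cleanly to $\prod_j(1+C_m L_m\Delta)$ without stray $\|G(t_j)^k\|_{(k,0)}$ growth; the Cauchy argument for existence likewise hinges on \eqref{eq-lipschitz} being linear in $|t'-t|$.
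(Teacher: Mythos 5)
Your proposal follows essentially the same route as the paper: a piecewise-static approximation $U_n(t,s)$, a uniform $D(A^m)$-bound obtained by inserting $\tilde G(t_{j+1})^m\tilde G(t_j)^{-m}=I+K_m(t_{j+1},t_j)$ between the semigroup factors and invoking \cref{lemma-K-bounded}, a Kato-style telescoping for the Cauchy property driven by the Lipschitz hypothesis \eqref{eq-lipschitz}, and finally the reflexivity argument to pass the bound to the limit. Two remarks on the execution. First, \cref{lemma-K-bounded} requires $0\in\rho(G(t))$; the paper therefore first shifts to $\tilde G(t)=G(t)-(\omega+1)$ before inserting the $K_m$-identities, a step you should make explicit. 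Second, the ``stray'' factors you worry about do appear but only at the two ends of the product, as $\|\tilde G^{-m}(t_j)\|_{(0,m)}$ and $\|\tilde G^m(t_i)\|_{(m,0)}$, and they are uniformly bounded in the partition points because these operator norms are controlled over $[0,T]$ (they go into the constant $C_m'$ in \eqref{eq-graph-norm-growth}); so the commutation does collapse cleanly.

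There is, however, one genuine gap: in the last paragraph you apply \cref{lemma-weak-convergence} ``with $A^k$ in place of $A$'' and justify the hypotheses by citing \cref{lemma-iterations-closed-1}. That lemma yields only that $A^k$ is \emph{closed}; it says nothing about $\rho(A^k)$ being non-empty, and for a general closed operator with non-empty resolvent set the iterated operator $A^k$ can have empty resolvent set (by the spectral mapping theorem, $\sigma(A^k)\supseteq\{\lambda^k:\lambda\in\sigma(A)\}$, which can be all of $\C$ even when $\rho(A)\neq\emptyset$). So the hypotheses of \cref{lemma-weak-convergence} are not verified for $A^k$. The paper avoids this by proceeding inductively in $k$ and applying \cref{lemma-weak-convergence} with the \emph{same} operator $A$, but viewed as a closed operator on the reflexive Banach space $D(A^{k-1})$ (reflexive by \cref{lemma-reflexive}), whose resolvent is inherited from that of $A$ on $X$ by restriction. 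An alternative repair is to apply the lemma to $(A-\lambda I)^k$ for a fixed $\lambda\in\rho(A)$, which is closed with $0$ in its resolvent set and has domain $D(A^k)$; either way, the step as you wrote it needs to be reworked.
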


\begin{proof}
First note that because of $A$ closed with non-empty resolvent set we have by \cref{lemma-iterations-closed-1} all $A^k$ closed and thus can work in Banach spaces $D(A^k)$.\\
Let $P_k$ be a sequence of equidistant partitions of $[0,T]$ with $k$ subintervals $[t_i,t_{i+1}]$ with $t_0=0$, $t_k=T$, and mesh size $T/k$ going to zero as $k \rightarrow \infty$. We write $\lfloor s \rfloor_k$ for the largest $t_i$ in the partition $P_k$ smaller or equal than $s \in [0,T]$. We define the stepwise static approximation to the evolution system by combining the $k$ individual evolution semigroups ${U_k}^{(i)}$ defined by the static generators $G(t_{i})$, $0 \leq i \leq k-1$.
\begin{equation}\label{eq-stepwise-approx}
\begin{aligned}
U_k(t,s) &= U_k^{(i)}(t-s) \quad \mathrm{if} \; t_{i} \leq s \leq t \leq t_{i+1} \\
U_k(t,s) &= U_k(t,t_i)U_k(t_i,s) \quad \mathrm{with} \; s < t_i < t \; \mathrm{else.}
\end{aligned}
\end{equation}
We show now that for $k\rightarrow \infty$ the $U_k(t,s)$ converges uniformly in $t$ and preserves the desired degree of regularity. Convergence is tested with the Cauchy property of the sequence $U_k$. We use $\partial_t U_k(t,s) = G(\lfloor t \rfloor_k)U_k(t,s)$ and $\partial_s U_k(t,s) = -U_k(t,s)G(\lfloor s \rfloor_k)$ which follows directly from the definition of $U_k$ above and the evolution semigroup property of ${U_k}^{(i)}$.
\begin{equation}\label{eq-cauchy-sequence}
\begin{aligned}
U_k(t,s)-U_l(t,s) &= U_l(t,r) U_k(r,s)\Big|_{r=s}^t \\
&= \int_s^t \partial_r (U_l(t,r) U_k(r,s)) \d r \\
&= \int_s^t \big((\partial_r U_l(t,r)) U_k(r,s) + U_l(t,r) (\partial_r U_k(r,s))\big) \d r \\
&= - \int_s^t U_l(t,r) \big(G(\lfloor r \rfloor_l) - G(\lfloor r \rfloor_k)\big) U_k(r,s) \d r \\
&= - \int_s^t U_l(t,r) \big(B(\lfloor r \rfloor_l) - B(\lfloor r \rfloor_k)\big) U_k(r,s) \d r
\end{aligned}
\end{equation}
(Note that a problem arises with the time derivative if $r=t_i$ because then the right and left derivatives do not match. But this is just at a finite number of points that can always be omitted from the integral.)
We check the Cauchy property in the $\|\cdot\|_{(m,m-1)}$ norm.
\begin{equation}\label{eq-cauchy-sequence-norm}
\|U_k(t,s)-U_l(t,s)\|_{(m,m-1)} \leq \int_s^t \|U_l(t,r)\|_{(m-1,m-1)} \|B(\lfloor r \rfloor_l) - B(\lfloor r \rfloor_k)\|_{(m,m-1)} \|U_k(r,s)\|_{(m,m)} \d r
\end{equation}
Since $B(t)$ has the Lipschitz-property up to order $m$ which implies continuity in the $\mathcal{B}(D(A^m),D(A^{m-1}))$ norm, the difference would go to zero if $k \rightarrow \infty$. But are $\|U_l(t,r)\|_{(m-1,m-1)}$ and $\|U_k(r,s)\|_{(m,m)}$ uniformly bounded? We will test for $\|U_k(r,s)\|_{(m,m)} < \infty$, all lower orders at all times $0 \geq s \geq r \geq T$ apply equally.\\
The idea is to switch to a shifted, auxiliary generator $\tilde{G}(t) = G(t) - (\omega +1)$ such that $0 \in \rho(\tilde{G}(t))$ for all $t$ (also see \cref{note-generator}) thus achieving accordance with the conditions of \cref{lemma-K-bounded}. Then we introduce the identities $\tilde{G}^{-m}(t_i)\tilde{G}^m(t_i)$ in front of all the short-time evolution operators, exchange them with the evolution operators originating from the same generator, and give an estimate for the arising terms involving $K_m$ by \cref{lemma-K-bounded} applied to $\tilde{G}$. Let $i$ be such that $\lfloor s \rfloor_k = t_i$, i.e., $i=\lfloor sk/T \rfloor$ (usual floor function brackets), and $j$ be such that $\lfloor r \rfloor_k = t_j$, i.e., $j=\lfloor rk/T \rfloor$.
\begin{equation}
\begin{aligned}
U_k(r,s) =& U_k^{(j)}(r-t_j) U_k^{(j-1)}(t_j-t_{j-1}) \ldots U_k^{(i)}(t_{i+1}-s) \\
=& \tilde{G}^{-m}(t_j)U_k^{(j)}(r-t_j)\tilde{G}^m(t_j) \tilde{G}^{-m}(t_{j-1})U_k^{(j-1)}(t_j-t_{j-1}) \tilde{G}^{m}(t_{j-1}) \ldots \\
&\ldots \tilde{G}^{-m}(t_{i+1})U_k^{(i+1)}(t_{i+2}-t_{i+1}) \tilde{G}^{m}(t_{i+1}) \tilde{G}^{-m}(t_i)U_k^{(i)}(t_{i+1}-s) \tilde{G}^{m}(t_i)
\end{aligned}
\end{equation}
We write the $\tilde{G}(t')^m \tilde{G}(t)^{-m}$ encounters like in \cref{lemma-K-bounded} as $K_m(t',t)+I$.
\begin{equation}
\begin{aligned}
U_k(r,s) = \tilde{G}^{-m}(t_j) &U_k^{(j)}(r-t_j) \prod_{l=i+1}^{j-1} \Big( K_m(t_{l+1},t_{l})+I \Big) U_k^{(l)}(t_{l+1}-t_l) \\
&\Big( K_m(t_{i+1},t_{i})+I \Big) U_k^{(i)}(t_{i+1}-s) \tilde{G}^{m}(t_i)
\end{aligned}
\end{equation}
(Note that the product is time-ordered.) Next we estimate the $\|\cdot\|_{(m,m)}$ norm of this expression using the result from \cref{lemma-K-bounded} and regular mesh size $t_i-t_{i-1} = T/k$. For this we repeatedly use the quasicontraction property that assures $\|U^{(l)}_k(t_{l+1}-t_l)\|_{(0,0)} \leq \e^{\omega (t_{l+1}-t_l)}$.
\begin{equation}\label{eq-mm-bound}
\begin{aligned}
\|U_k(r,s)\|_{(m,m)} &\leq \e^{\omega(r-s)} \|\tilde{G}^{-m}(t_j)\|_{(0,m)} \prod_{l=i}^{j-1} \Big( \|K_m(t_{l+1},t_{l})\|_{(0,0)} + 1 \Big) \|\tilde{G}^{m}(t_i)\|_{(m,0)} \\
&\leq \e^{\omega(r-s)} \|\tilde{G}^{-m}(t_j)\|_{(0,m)} \|\tilde{G}^{m}(t_i)\|_{(m,0)} \left( \frac{C_m L_m T}{k} + 1 \right)^{j-i}
\end{aligned}
\end{equation}
We rewrite $j-i = \lfloor rk/T \rfloor - \lfloor sk/T \rfloor \leq 1+ \lfloor (r-s)k/T \rfloor = 1+ k \lfloor (r-s)k/T \rfloor/k$ to be able to introduce an exponential function in the limit $k \rightarrow \infty$ while $\lfloor (r-s)k/T \rfloor/k \rightarrow (r-s)/T$.
\begin{equation}\label{eq-mm-bound-2}
\begin{aligned}
\left( \frac{C_m L_m T}{k} + 1 \right)^{j-i} &\leq \left( \frac{C_m L_m T}{k} + 1 \right) \left(\left( \frac{C_m L_m T}{k} + 1 \right)^k \right)^{\lfloor (r-s)k/T \rfloor/k} \\[0.5em]
&\longrightarrow e^{C_m L_m (r-s)} \leq e^{C_m L_m T}
\end{aligned}
\end{equation}
This means that \eqref{eq-cauchy-sequence-norm} goes to zero and the Cauchy sequence \eqref{eq-cauchy-sequence} must converge to a well-defined and bounded $U(t,s) : D(A^m) \rightarrow D(A^{m-1})$. In the lowest order $m=1$ the operators $U(t,s) : D(A) \rightarrow X$ can then be continuously extended from the dense $D(A)$ to the whole space $X$ because they are clearly bounded on $X$ (as a combination of quasicontraction semigroup elements). \Cref{lemma-weak-convergence} helps us to establish $U(t,s):D(A^m) \rightarrow D(A^{m})$ bounded as the desired regularity result.\footnote{This result is actually missing in the proof in \citet[Th.~3.41, Th.~3.42]{penz-phd}, where the boundedness was just directly inferred from the $U_k$ boundedness.} For this we take $x = U(t,s)\varphi$ and $x_k = U_k(t,s)\varphi$ for any $\varphi \in D(A^m)$ and the Banach space $D(A^{m-1})$ (denoted $X$ in the lemma) which is a reflexive Banach space by \cref{lemma-reflexive}. The weak limit clearly follows as a result of even strong convergence\footnote{So there is room for a more general statement, since at this spot only weak convergence would be sufficient.} in $D(A^{m-1})$ and $\{ \|Ax_k\|_{(m-1)} \}_k$ bounded follows from $\{\|x_k\|_{(m)}\}_k$ bounded which was just shown in \eqref{eq-mm-bound} above. Then the assertion of \cref{lemma-weak-convergence} says
\begin{equation}\label{eq-limsup-estimate}
\|Ax\|_{(m-1)} \leq \limsup_{k \rightarrow \infty} \|Ax_k\|_{(m-1)}.
\end{equation}
We first deal with the case $m=1$ which means for the inequality above
\begin{equation}
\|AU(t,s)\varphi\| \leq \limsup_{k \rightarrow \infty} \|AU_k(t,s)\varphi\|
\end{equation}
and thus by introducing the $\|\cdot\|_{(1)}$ norm
\begin{equation}
\begin{aligned}
\|U(t,s)\varphi\|_{(1)} &= \|U(t,s)\varphi\| + \|AU(t,s)\varphi\| \leq \|U(t,s)\varphi\| + \limsup_{k \rightarrow \infty} \|AU_k(t,s)\varphi\| \\
&= \limsup_{k \rightarrow \infty} \|U_k(t,s)\varphi\|_{(1)} + \|U(t,s)\varphi\| - \limsup_{k \rightarrow \infty} \|U_k(t,s)\varphi\|.
\end{aligned}
\end{equation}
In the limit the last two terms cancel because we already showed that $U_k(t,s) \rightarrow U(t,s)$ converges on $X$. So using the estimates from \eqref{eq-mm-bound} and \eqref{eq-mm-bound-2} we get
\begin{equation}
\|U(t,s)\|_{(1,1)} \leq C'_1 \exp(C_1 L_1 T)
\end{equation}
where the additional constants from \eqref{eq-mm-bound} have been collected in $C'_1$. This means $U_k(t,s) \rightarrow U(t,s)$ converges also as a bounded operator $D(A) \rightarrow D(A)$. The next step is already for arbitrary $m$ and we use \eqref{eq-limsup-estimate} again.
\begin{equation}
\begin{aligned}
\|U(t,s)\varphi\|_{(m)} &= \|U(t,s)\varphi\| + \|A^m U(t,s)\varphi\| 
= \|U(t,s)\varphi\| + \|A^{m-1} A U(t,s)\varphi\| \\
&= \|U(t,s)\varphi\| + \|A U(t,s)\varphi\|_{(m-1)} - \|A U(t,s)\varphi\| \\
&\leq \|U(t,s)\varphi\| + \limsup_{k \rightarrow \infty} \|AU_k(t,s)\varphi\|_{(m-1)} - \|A U(t,s)\varphi\|
\end{aligned}
\end{equation}
Next the $D(A^{m-1})$ norm gets rewritten to a $D(A^m)$ norm.
\begin{equation}
\begin{aligned}
\|U(t,s)\varphi\|_{(m)} &\leq \limsup_{k \rightarrow \infty} \|U_k(t,s)\varphi\|_{(m)} \\
&+ \|U(t,s)\varphi\| - \limsup_{k \rightarrow \infty}\|U_k(t,s)\varphi\| \\
&- \|A U(t,s)\varphi\| + \limsup_{k \rightarrow \infty} \|A U_k(t,s)\varphi\|
\end{aligned}
\end{equation}
This time the whole two last lines vanish in the limit because in the meantime we also established convergence on $D(A)$, so with the estimates from \eqref{eq-mm-bound} and \eqref{eq-mm-bound-2} we finally get boundedness $D(A^m) \rightarrow D(A^m)$.
\begin{equation}\label{eq-graph-norm-growth}
\|U(t,s)\|_{(m,m)} \leq C'_m \exp(C_m L_m T)
\end{equation}
The evolution system properties (i) and (ii) from \cref{def-evolution-system} follow directly from the semigroup properties of the ${U_{k}}^{(i)}$ and uniform convergence of $U_k(t,s)$ in $s,t$ which allows us to exchange limits. Finally we have to show that the evolution semigroup is a solution to the Cauchy problem $\partial_t U(t,s) = G(t)U(t,s)$ (the $\partial_s$ version can be handled equivalently). Again we use uniform convergence and interchange time differentiation at $t \neq t_i \in P_k$ and the limit for the sequence $U_k(t,s)$.
\begin{equation}
\partial_t U(t,s) = \partial_t \lim_{k\rightarrow \infty} U_k(t,s) = \lim_{k\rightarrow \infty} \partial_t U_k(t,s) = \lim_{k\rightarrow \infty} G(\lfloor t \rfloor_k)U_k(t,s)
\end{equation}
On $D(A)$ we have $U_k(t,s) \rightarrow U(t,s) \in \mathcal{B}(D(A),D(A))$ as well as $G(\lfloor t \rfloor_k) \rightarrow G(t) \in \mathcal{B}(D(A),X)$ so we can establish the limits independently and get the desired evolution system for the Cauchy problem $\partial_t U(t,s) = G(t)U(t,s)$. If $t=t_i \in P_k$ the right and left derivatives will differ and yield $G(t_i)$ and $G(t_{i-1})$ respectively but in the limit $k \rightarrow \infty$ they are equal again because of the assumed continuity of the generator $G$ in time.
\end{proof}

\begin{note}
In \citet{schmid-griesemer-2016} it is assumed that $G(t)$ is the generator of a \textit{group} instead of a semigroup to get solutions to the equation involving the time derivative $\partial_t$ instead of just the right derivative $\partial_t^+$.
\end{note}



\begin{note}
Establishing evolution systems between the different orders of graph-norm spaces $D(A^k)$ bears strong resemblance to the construction of so-called ``Sobolev towers'' in \citet[II.5.a]{engel-nagel}, although there the construction is only for time-independent generators.
\end{note}

\begin{note}
A similar proof strategy can be employed to show Fréchet differentiability of the solution to \eqref{eq-evolution-equation} in a Banach space including the time variable with respect to the perturbations $B$. See \citet[Th.~4.10]{penz-phd} for such a result and \citet{penz-2015} for a similar result using the completely different proof method of ``successive substitutions''.
\end{note}

\section{Skew-adjointness, generation theorems, and graph norm equivalents in Sobolev spaces}
\label{sec-graph-norm-equivalents}

The first lemma is standard and introduces the new default category for generators, i.e., skew-adjoint operators on a Hilbert space $\H$ with inner product $\langle \cdot,\cdot \rangle$. This notion has already been used in \cref{note-unitary-evolution-system} and means that $A$ is skew-adjoint if $\i A$ is self-adjoint or simply $A^* = -A$. The second result is the famous Stone theorem, see for example \citet[Th.~II.3.24]{engel-nagel}, which can be seen as a special case of the so-called generation theorems for the Banach space setting like Hille--Yosida \citet[Th.~II.3.5]{engel-nagel}. Generation theorems give conditions on the generators such that certain types of semigroups arise. This means the first lemma yielding closedness could be also replaced by the comment in the beginning of the proof of \cref{lemma-iterations-closed-2}, if the generator links to a strongly continuous semigroup. Finally we prove equivalence of the graph norm of the iterated Laplacian (or $\i \Delta$ that is a skew-adjoint operator) and the usual Sobolev norm.

\begin{lemma}\label{lemma-sa-closed}
A self-adjoint or skew-adjoint operator is always closed.
\end{lemma}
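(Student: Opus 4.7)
The plan is to reduce both cases to the classical fact that the adjoint of any densely defined operator is automatically closed, whereupon the self-adjoint case is immediate and the skew-adjoint case follows by noting that $A$ is closed if and only if $-A$ is.

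To show $A^*$ is closed, I would use the graph characterization from \cref{note-graph-closed}. Recall that $y \in D(A^*)$ with $A^*y = z$ exactly when $\langle Ax, y \rangle = \langle x, z \rangle$ for all $x \in D(A)$. Taking a sequence $(y_n, A^*y_n)$ in $\Gamma(A^*)$ with $y_n \to y$ and $A^*y_n \to z$ in $\H$, for each fixed $x \in D(A)$ the continuity of the inner product in each argument gives
\begin{equation}
\langle Ax, y \rangle = \lim_{n\to\infty} \langle Ax, y_n \rangle = \lim_{n\to\infty} \langle x, A^*y_n \rangle = \langle x, z \rangle,
\end{equation}
so $(y,z) \in \Gamma(A^*)$ and therefore $\Gamma(A^*)$ is closed in $\H \times \H$. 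Equivalently, one can appeal to the one-line observation that $\Gamma(A^*) = (V\Gamma(A))^{\perp}$ inside $\H \times \H$ with $V(x,y) = (-y,x)$; since orthogonal complements are always closed, closedness of $A^*$ is automatic.

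For the self-adjoint case the conclusion is immediate: $A = A^*$ is closed. For the skew-adjoint case, $A^* = -A$ is closed by the above, and closedness manifestly passes through multiplication by the scalar $-1$, so $A$ is closed as well. I expect no real obstacle here — the only subtle point is the silent use of dense definedness of $A$ in order to speak of $A^*$ at all, but this is built into the standard notions of self- and skew-adjointness and need not be argued.
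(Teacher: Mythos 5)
Your proposal is correct and rests on the same core computation as the paper's proof: passing $A$ across the inner product and taking limits, then using self- (or skew-) adjointness to identify the limit point as lying in the domain with the right image. Packaging it as the general fact that the adjoint of any densely defined operator is closed (or via $\Gamma(A^*) = (V\Gamma(A))^{\perp}$) and then specializing to $A = A^*$ resp.\ $A^* = -A$ is just a mild reorganization of the argument given in the paper, which absorbs the sign directly in the skew-adjoint case.
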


\begin{proof}
For $A$ self-adjoint and $z \in D(A)$ it holds with the same notation as in \cref{def-closed} that $\langle A x_n, z \rangle = \langle x_n, A z \rangle$ and therefore in the limit $n \rightarrow \infty$ we get $\langle y, z \rangle = \langle x, Az \rangle$. For this reason we have $x \in D(A)$ and $y = Ax$. The proof for skew-adjoint operators just introduces a minus sign that gets absorbed again when identifying $y=Ax$.
\end{proof}

\begin{theorem}[Stone]\label{th-stone}
Let $A$ be densely defined on $\H$. Then $A$ is the generator of a unitary (and thus contraction) group if and only if $A$ is skew-adjoint.
\end{theorem}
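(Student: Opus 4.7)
The plan is to prove the equivalence in two directions; the substantive content is the implication from skew-adjointness to generation of a unitary group, so I would dispose of the converse first. Assume $A$ generates a unitary group $U(t)$: then unitarity gives $U(t)^{*}=U(t)^{-1}=U(-t)$ for all $t\in\R$. The generator of $t\mapsto U(-t)$ is $-A$, while the generator of $t\mapsto U(t)^{*}$ is $A^{*}$ (a standard fact about adjoint semigroups on a Hilbert space, with matching domains). Equating the two identifies $A^{*}=-A$, which is skew-adjointness.

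For the main direction, assume $A$ is skew-adjoint. The approach is to verify the hypotheses of the Hille--Yosida generation theorem for both $A$ and $-A$, so that each generates a contraction semigroup; the two semigroups then glue together into a strongly continuous group $U(t)$ on $\R$. From $A^{*}=-A$ one immediately gets $\langle Ax,x\rangle = -\overline{\langle Ax,x\rangle}$ for $x\in D(A)$, hence $\mathrm{Re}\,\langle Ax,x\rangle = 0$. For real $\lambda>0$ I would expand
\begin{equation}
\|(A-\lambda I)x\|^{2} = \|Ax\|^{2} - 2\lambda\,\mathrm{Re}\,\langle Ax,x\rangle + \lambda^{2}\|x\|^{2} \geq \lambda^{2}\|x\|^{2},
\end{equation}
which at once gives injectivity of $A-\lambda I$ and the resolvent bound $\|(A-\lambda I)^{-1}\|\leq 1/\lambda$ on the range. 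Closedness of $A$ is already provided by \cref{lemma-sa-closed}.

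The step that requires real care is showing that the range $R(A-\lambda I)$ exhausts $\H$. Here I would use $(A-\lambda I)^{*}=-A-\lambda I$, so $R(A-\lambda I)^{\perp}=N(A+\lambda I)$; since $-A$ is again skew-adjoint, the same estimate applied to $-A$ forces $N(A+\lambda I)=\{0\}$, giving density of the range. Combined with the coercive estimate the range is closed, hence equal to $\H$. Hille--Yosida then delivers a contraction semigroup for $A$, and repeating the argument for $-A$ yields the semigroup for negative times, which assemble into the desired group.

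Finally, to obtain unitarity of the resulting group, I would compute on $D(A)$ the derivative $\partial_{t}\|U(t)x\|^{2} = 2\,\mathrm{Re}\,\langle AU(t)x,U(t)x\rangle = 0$, so norms are preserved; combined with $U(-t)=U(t)^{-1}$ from the group law this gives $U(t)^{*}=U(t)^{-1}$. The main obstacle in the whole argument is the range-density step, since this is where the full strength of skew-adjointness enters (mere dissipativity, i.e., $\mathrm{Re}\,\langle Ax,x\rangle\leq 0$, would only give a closed range and a contraction semigroup in one time direction, not a unitary group); the rest of the proof is routine verification of the Hille--Yosida hypotheses and the derivative calculation for unitarity.
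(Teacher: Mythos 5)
The paper does not actually prove this theorem: Stone's theorem is stated as a known result with a pointer to Engel--Nagel (Th.~II.3.24), so there is no in-paper argument to compare against, and your proposal supplies what the paper outsources. Your proof is correct and is essentially the standard textbook route, the same one used in the cited reference: skew-adjointness gives $\mathrm{Re}\,\langle Ax,x\rangle=0$, hence $\|(A-\lambda I)x\|\geq\lambda\|x\|$ for $\lambda>0$; the identity $(A-\lambda I)^{*}=-A-\lambda I$ together with the same estimate for the (again skew-adjoint) operator $-A$ gives $R(A-\lambda I)^{\perp}=N(A+\lambda I)=\{0\}$, closedness (your appeal to \cref{lemma-sa-closed}) upgrades the dense range to all of $\H$, and Hille--Yosida applied to $\pm A$ produces the group, with unitarity from $\partial_t\|U(t)x\|^{2}=0$ on $D(A)$ plus invertibility; the converse via identifying the generator of $t\mapsto U(t)^{*}$ with $A^{*}$ is also standard and legitimately uses that on a Hilbert space (reflexive) the adjoint of a $C_0$-group is again strongly continuous with generator $A^{*}$. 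Two small points to tighten if you write this out in full: the ``gluing'' of the two contraction semigroups $T_{\pm}$ into a group needs the short verification that $T_{+}(t)T_{-}(t)=I$ (differentiate $t\mapsto T_{+}(t)T_{-}(t)x$ for $x\in D(A)$ and use that the derivative vanishes), and your closing aside about mere dissipativity is imprecise --- dissipativity alone yields neither a closed range (that also needs closedness of $A$) nor a contraction semigroup (that needs the range condition, i.e.\ m-dissipativity); the aside is harmless, since your actual argument never relies on it.
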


\begin{theorem}\label{th-sobolev-norm-laplace}
For general domains $\Omega \subseteq \R^n$ and $m \in \N$, considering the Hilbert space $W^{2m,2}(\Omega) \cap W^{m,2}_0(\Omega)$ the standard Sobolev norm is equivalent to the graph norm of $D(\Delta^m)$.
\begin{equation}
\|u\|_{2m,2} \sim \|u\|_{D(\Delta^m)}
\end{equation}
\end{theorem}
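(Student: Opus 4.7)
The proof reduces to two one-sided estimates, one of which is routine and the other substantive.

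The easy direction $\|u\|_{D(\Delta^m)}\lesssim\|u\|_{2m,2}$ follows from the multinomial expansion
\[
\Delta^m u = \sum_{|\alpha|=m}\binom{m}{\alpha}\partial^{2\alpha}u,
\]
which exhibits $\Delta^m u$ as a finite linear combination of $2m$-th order partial derivatives, so that $\|\Delta^m u\|_2 \lesssim \|u\|_{2m,2}$; combined with the trivial $\|u\|_2\leq\|u\|_{2m,2}$ this half is complete.

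For the reverse $\|u\|_{2m,2}\lesssim \|u\|_2 + \|\Delta^m u\|_2$, the plan depends on the geometry of $\Omega$. On $\Omega=\R^n$, Plancherel together with the pointwise inequality $(1+|\xi|^2)^{2m}\leq 2^{2m}(1+|\xi|^{4m})$ and the identity $\widehat{\Delta^m u}=(-|\xi|^2)^m\hat u$ gives the estimate in a single line. For a bounded domain I would induct on $m$. The base case $m=1$ is the classical $L^2$ Dirichlet--Laplacian regularity estimate
\[
\|u\|_{2,2}\leq C\bigl(\|u\|_2+\|\Delta u\|_2\bigr),\qquad u\in W^{2,2}(\Omega)\cap W^{1,2}_0(\Omega),
\]
whose heart is the integration-by-parts identity $\|\Delta u\|_2^2=\sum_{i,j}\|\partial_i\partial_j u\|_2^2$, valid on $C_c^\infty(\Omega)$ and transferred by density, together with Poincar\'e to pick up the first-order derivatives. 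For the inductive step I would apply the statement with order $m-1$ to $v=\Delta u$, observing that $v\in W^{2(m-1),2}(\Omega)$, which controls all partial derivatives of $u$ of orders $2$ through $2m$; combining with the $m=1$ case then handles the remaining zeroth- and first-order terms.

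The main obstacle is the boundary-condition bookkeeping in the induction: the iterate $\Delta u$ has to lie in the Dirichlet class of the inductive-hypothesis space, and tracking this requires the vanishing normal traces $\partial_\nu^j u\big|_{\partial\Omega}=0$ for $j=0,\dots,m-1$ encoded in $u\in W^{m,2}_0(\Omega)$, combined with Green's identities at $\partial\Omega$. This is precisely why the theorem intersects $W^{2m,2}$ with $W^{m,2}_0$ rather than with either the full $W^{2m,2}_0$ or nothing. On rough domains where classical elliptic boundary regularity is unavailable, one can bypass the induction by defining $-\Delta$ as the Friedrichs self-adjoint extension on $L^2(\Omega)$ and identifying its $m$-th iterated domain with $W^{2m,2}(\Omega)\cap W^{m,2}_0(\Omega)$ via the functional calculus and a density argument on $C_c^\infty(\Omega)$.
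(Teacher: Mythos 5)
Your easy direction and the Fourier argument on $\Omega=\R^n$ are fine, but the bounded-domain induction has a genuine gap at exactly the point you flag as ``bookkeeping'': to apply the order-$(m-1)$ statement to $v=\Delta u$ you need $v\in W^{2(m-1),2}(\Omega)\cap W^{m-1,2}_0(\Omega)$, and the membership $\Delta u\in W^{m-1,2}_0(\Omega)$ simply does not follow from $u\in W^{2m,2}(\Omega)\cap W^{m,2}_0(\Omega)$. Already for $m=2$, take a flat boundary portion $\{x_n=0\}$ and $u(x)=x_n^2\phi(x)$ with $\phi$ smooth and compactly supported near that portion: then $u$ and $\nabla u$ have zero trace (so $u\in W^{4,2}\cap W^{2,2}_0$ locally), yet $\Delta u|_{x_n=0}=2\phi\neq 0$, so $\Delta u\notin W^{1,2}_0$ and the inductive hypothesis cannot be invoked. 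The vanishing of $\partial_\nu^j u$ for $j\le m-1$ controls no trace of second- or higher-order derivatives, so no amount of Green's identities rescues the step as stated. There are further problems with the ingredients you lean on: the identity $\|\Delta u\|_2^2=\sum_{i,j}\|\partial_i\partial_j u\|_2^2$ cannot be ``transferred by density'' from $C_c^\infty(\Omega)$ to $W^{2,2}\cap W^{1,2}_0$, because the $W^{2,2}$-closure of $C_c^\infty(\Omega)$ is $W^{2,2}_0(\Omega)$, a strictly smaller space in general (on non-convex domains the identity acquires boundary curvature terms); the classical Dirichlet elliptic estimate and Poincar\'e require boundedness and boundary regularity, whereas the theorem is stated for general $\Omega\subseteq\R^n$; and the Friedrichs-extension fallback is circular, since identifying $D((-\Delta)^m)$ with $W^{2m,2}\cap W^{m,2}_0$ on rough domains is essentially the statement to be proved (and is not true on arbitrary domains without the hypotheses built into the statement).

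For contrast, the paper's proof never asks $\Delta u$ to belong to a Dirichlet class and never invokes elliptic regularity or Poincar\'e. It proves the algebraic identity $\sum_{|\alpha|=2}\|D^\alpha u\|_2^2=\|\Delta u\|_2^2$ directly by integrating by parts, with the boundary terms killed by the $W^{m,2}_0$ hypothesis, then iterates this identity with $u$ replaced by $\Delta v$ to control all even-order derivatives by $\|\Delta^k v\|_2$, and absorbs the odd orders by a single integration by parts followed by Cauchy--Schwarz and the arithmetic--geometric mean inequality, which wedges each odd order between the two adjacent even orders. If you want to repair your induction, you should restructure it along these lines: induct on the order of the derivative being estimated (through the powers $\Delta^k$) rather than on the statement of the theorem applied to $\Delta u$, so that the only boundary input ever needed is the vanishing of the boundary pairings in the specific integrations by parts, not a trace condition on $\Delta u$ itself.
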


\begin{proof}
First observe that for arbitrary $u,v \in L^2(\Omega)$ it holds with the  Cauchy--Schwarz inequality and the inequality of arithmetic and geometric means that
\begin{equation}\label{eq-uv-estimate}
|\langle u,v \rangle| \leq \|u\|_2 \|v\|_2 \leq \onehalf (\|u\|_2^2 + \|v\|_2^2).
\end{equation}
The relation 
\begin{equation}
\sum_{l=0}^m \|\Delta^l u\|_2 \sim \left( \sum_{l=0}^m \|\Delta^l u\|_2^2 \right)^{1/2} \leq \|u\|_{2m,2}
\end{equation}
is fairly obvious. But can we also establish an estimate
\begin{equation}
\sum_{|\alpha| = k} \|D^\alpha u\|_2^2 \leq C \sum_{l=0}^m \|\Delta^l u\|_2^2
\end{equation}
in the other direction for all $0 < k \leq 2m$? If $k$ odd we use integration by parts ($u \in H_0^{m}$ is enough such that all boundary terms vanish) to get with \eqref{eq-uv-estimate}
\begin{equation}
\|D^\alpha u\|_2^2 = \langle D^\alpha u, D^\alpha u \rangle = |\langle D^{\alpha_1} u, D^{\alpha_2} u \rangle| \leq \onehalf (\|D^{\alpha_1} u\|_2^2 + \|D^{\alpha_2} u\|_2^2)
\end{equation}
where now $|\alpha_1|, |\alpha_2|$ even. For even $|\alpha|$ we proceed inductively and start with $|\alpha|=2$ and write out all partial derivatives separately. Integration by parts then yields
\begin{equation}\label{eq-sum-laplace-norm}
\begin{aligned}
\sum_{|\alpha| = 2} \|D^\alpha u\|_2^2 &= \sum_{|\alpha| = 2} \langle D^\alpha u, D^\alpha u \rangle \\
&= \sum_{i=1}^n \langle \partial_i^2 u, \partial_i^2 u \rangle + \sum_{i \neq j} \langle \partial_i \partial_j u, \partial_i \partial_j u \rangle \\
&= \sum_{i=1}^n \langle \partial_i^2 u, \partial_i^2 u \rangle + \sum_{i \neq j} \langle \partial_i^2 u, \partial_j^2 u \rangle \\
&= \sum_{i=1}^n \langle \partial_i^2 u, \partial_i^2 u + \sum_{j\neq i}\partial_j^2 u \rangle \\[0.5em]
&= \langle \Delta u,\Delta u \rangle = \|\Delta u\|_2^2.
\end{aligned}
\end{equation}
For $|\alpha|>2$ even we have to repeat the argument taking $u = \Delta v$ which means that
\begin{equation}
\|\Delta^2 v\|_2^2 = \sum_{|\alpha| = 2} \|D^\alpha \Delta v\|_2^2 = \sum_{|\alpha| = 2} \|\Delta D^\alpha v\|_2^2 = \sum_{|\alpha|, |\beta| = 2} \|D^{\alpha + \beta} v\|_2^2.
\end{equation}
Now $D^{\alpha + \beta}$ with all possible $|\alpha|=|\beta|=2$ includes all derivatives of order 4, some even multiple times. So we have the estimate
\begin{equation}
\sum_{|\alpha| = 4} \|D^{\alpha} v\|_2^2 \leq C \|\Delta^2 v\|_2^2
\end{equation}
that continues likewise to higher even $|\alpha|>4$.\footnote{This proof was inspired by answers in the following two StackExchange Mathematics threads:
\url{http://math.stackexchange.com/questions/101021} and
\url{http://math.stackexchange.com/questions/301404}.
}
\end{proof}

\begin{note}
The zero boundary condition $u \in W_0^{m,2}$ might also be replaced by a periodic domain where boundary terms vanish when integrating by parts. Note that in the theorem above the particular properties of the domain $\Omega$ are not of interest as it is usually the case considering $W^{m,p}_0$ Sobolev spaces because the respective (test) function can just be extended to all of $\R^n$ with zero, see \citet[4.12 III and 3.27]{adams}. A similar result on bounded domains including the graph norm of more general elliptic partial differential operators and the associated weak solutions of an inhomogeneous problem is called ``boundary regularity'' in \citet[6.3.2]{evans}. The even more general setting of elliptic partial differential operators of any order on compact manifolds is discussed in \citet[ch.~III, Th.~5.2 (iii)]{lawson-michelsohn}. In the wider literature similar results are known under the names ``G{\aa}rding inequality'' and ``fundamental elliptic estimate''.
\end{note}


\section{Application to the Schrödinger equation}
\label{sec-se}

To treat the quantum mechanical case of particles in singular Coulombic potentials and other unbounded potentials we make use of the following lemma from Fourier analysis. Here the number of dimensions of the underlying space actually plays a crucial role and we are limited to dimension $n \leq 3$ for the one-particle configuration space in all further results because of the following lemma.

\begin{lemma}\label{lemma-ab-inequ}\citep[Th.~IX.28]{reed-simon-2}\\
Let $\varphi \in W^{2,2}(\R^n)$, $n\leq 3$. Then for all $\alpha >0$ there is a $\beta >0$ independent of $\varphi$ such that
\begin{equation}
\|\varphi\|_\infty \leq \alpha\|\Delta \varphi\|_2 + \beta\|\varphi\|_2.
\end{equation}
\end{lemma}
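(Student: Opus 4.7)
The plan is to prove this via Fourier analysis, exploiting the fact that in low dimensions $W^{2,2}$ embeds continuously into $L^\infty$, and then inserting a scaling parameter to achieve the arbitrarily small coefficient $\alpha$.

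First I would reduce the $L^\infty$ bound to an $L^1$ bound on the Fourier transform: by Fourier inversion and Hausdorff--Young (or just the elementary $\|f\|_\infty \leq (2\pi)^{-n/2} \|\hat f\|_1$), it suffices to estimate $\|\hat\varphi\|_1$. Next I would split $|\hat\varphi(k)| = (\lambda^{-1} + \lambda|k|^2)^{-1} \cdot (\lambda^{-1} + \lambda|k|^2)|\hat\varphi(k)|$ for a scaling parameter $\lambda>0$ and apply Cauchy--Schwarz:
\begin{equation*}
\|\hat\varphi\|_1 \leq \left\| (\lambda^{-1} + \lambda|k|^2)^{-1} \right\|_2 \, \left\| (\lambda^{-1} + \lambda|k|^2)\hat\varphi \right\|_2.
\end{equation*}
By the triangle inequality, Plancherel, and $\widehat{-\Delta\varphi}(k) = |k|^2 \hat\varphi(k)$, the second factor is bounded by $\lambda^{-1}\|\varphi\|_2 + \lambda\|\Delta\varphi\|_2$.

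The first factor is finite precisely when $\int (\lambda^{-1} + \lambda |k|^2)^{-2}\, \d k < \infty$; after a change of variable $k = \lambda^{-1}\xi$ this integral becomes $\lambda^{2-n}\int (1+|\xi|^2)^{-2}\, \d\xi$, which converges exactly when $n \leq 3$ (this is where the dimensional restriction enters). So the first factor equals $M_n \lambda^{(2-n)/2}$ for a finite dimensional constant $M_n$, and collecting everything gives
\begin{equation*}
\|\varphi\|_\infty \leq C_n \bigl( \lambda^{(2-n)/2 + 1} \|\Delta\varphi\|_2 + \lambda^{(2-n)/2 - 1} \|\varphi\|_2 \bigr).
\end{equation*}
Since the exponent $(2-n)/2+1 = (4-n)/2$ is strictly positive for $n \leq 3$, the coefficient in front of $\|\Delta\varphi\|_2$ tends to $0$ as $\lambda \to 0$. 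Given any $\alpha>0$ I would thus choose $\lambda$ small enough that $C_n \lambda^{(4-n)/2} \leq \alpha$, and then set $\beta := C_n \lambda^{-n/2}$, yielding the claim.

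The only subtle point is the dimension restriction: the whole argument rests on the $L^2$-integrability of $(1+|k|^2)^{-1}$, which holds exactly when $n < 4$. This is no obstacle, just a check. The rest is bookkeeping of the scaling exponents, which I expect to be routine.
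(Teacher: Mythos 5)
Your proposal is correct, and it is essentially the standard argument behind the cited result: the paper itself gives no proof, deferring to \citet[Th.~IX.28]{reed-simon-2}, whose proof is exactly this Fourier-inversion plus Cauchy--Schwarz estimate with a weight $(|k|^2+E)$ and $E\to\infty$, equivalent to your $\lambda$-splitting with $E=\lambda^{-2}$. The dimensional restriction enters in the same place (square-integrability of $(1+|k|^2)^{-1}$ for $n\leq 3$), and your bookkeeping of the scaling exponents and the choice of $\beta$ is sound.
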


The next theorem is then a standard application of \cref{lemma-ab-inequ} together with the Kato--Rellich theorem to the case of the Schrödinger Hamiltonian with zero boundary conditions, see \citet[Th.~X.12]{reed-simon-2} and \citet[Th.~V.4.11]{kato-book}. The Kato--Rellich theorem states that if $A$ self-adjoint and $B$ symmetric then $A+B$ is also self-adjoint whenever $B$ is $A$-bounded with relative bound strictly smaller than $1$. The critical condition is thus that the potential turns out to be $\Delta$-bounded. The spatial domain $\Omega$ is always assumed to be a (open and connected) subset of $\R^n$, $n \leq 3$.

\begin{theorem}\label{th-kato}\citep[Th.~X.15]{reed-simon-2}\\
Given a real potential $v \in L^2(\R^n) + L^\infty(\R^n)$, $n \leq 3$, the Hamiltonian $-\Delta+v$ is self-adjoint on $W^{2,2}(\Omega) \cap W^{1,2}_0(\Omega)$.
\end{theorem}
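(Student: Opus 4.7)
The plan is to verify the hypotheses of the Kato--Rellich theorem quoted just before the statement, so that self-adjointness of $-\Delta + v$ on $W^{2,2}(\Omega) \cap W^{1,2}_0(\Omega)$ follows from self-adjointness of $-\Delta$ on the same domain. That $-\Delta$ is self-adjoint with this Dirichlet domain is standard (e.g.\ via the Friedrichs extension of the form $u \mapsto \int_\Omega |\nabla u|^2$), and multiplication by a real-valued $v$ is symmetric on any subspace of $L^2$ on which it is defined. It therefore remains to show that multiplication by $v$ is $(-\Delta)$-bounded with relative bound strictly less than $1$.

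To this end I would decompose $v = v_1 + v_2$ with $v_1 \in L^2(\R^n)$ and $v_2 \in L^\infty(\R^n)$. For any $\varphi \in W^{2,2}(\Omega) \cap W^{1,2}_0(\Omega)$, Hölder's inequality gives
\begin{equation*}
\|v\varphi\|_2 \leq \|v_1\|_2\,\|\varphi\|_\infty + \|v_2\|_\infty\,\|\varphi\|_2,
\end{equation*}
and \cref{lemma-ab-inequ} provides, for arbitrary $\alpha > 0$, a $\beta > 0$ with $\|\varphi\|_\infty \leq \alpha\|\Delta\varphi\|_2 + \beta\|\varphi\|_2$. Combining the two estimates,
\begin{equation*}
\|v\varphi\|_2 \leq \alpha\|v_1\|_2\,\|\Delta\varphi\|_2 + \bigl(\beta\|v_1\|_2 + \|v_2\|_\infty\bigr)\|\varphi\|_2,
\end{equation*}
so the relative bound is zero (multiplication by $v$ is infinitesimally $(-\Delta)$-bounded), comfortably beneath the Kato--Rellich threshold. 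The theorem then yields self-adjointness of $-\Delta + v$ on $D(-\Delta) = W^{2,2}(\Omega)\cap W^{1,2}_0(\Omega)$.

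The main obstacle is that \cref{lemma-ab-inequ} is formulated for $\varphi \in W^{2,2}(\R^n)$, whereas our test function lives on the possibly proper subset $\Omega$. Extending $\varphi$ by zero gives $\tilde\varphi \in W^{1,2}(\R^n)$ by the $W^{1,2}_0$ boundary condition, but the distributional second derivatives of $\tilde\varphi$ generically pick up surface contributions on $\partial\Omega$, so $\tilde\varphi \notin W^{2,2}(\R^n)$ in general. I would circumvent this by approximating $\varphi$ in the graph norm of $-\Delta$ by functions in $C_c^\infty(\Omega)$, whose zero extensions sit in $C_c^\infty(\R^n) \subset W^{2,2}(\R^n)$, applying the lemma to these approximants, and passing to the limit---appealing to \cref{th-sobolev-norm-laplace} so that the full $W^{2,2}(\Omega)$ norm is controlled by the graph norm of $\Delta$ and thereby the $L^\infty$ bound is preserved in the limit.
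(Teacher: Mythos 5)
Your core argument---decompose $v=v_1+v_2$, estimate $\|v\varphi\|_2 \leq \|v_1\|_2\|\varphi\|_\infty + \|v_2\|_\infty\|\varphi\|_2$, invoke \cref{lemma-ab-inequ} to get infinitesimal $\Delta$-boundedness, and conclude with Kato--Rellich---is exactly the route the paper intends: it offers no proof beyond citing Reed--Simon Th.~X.15 together with the Kato--Rellich theorem, and \cref{lemma-kato-perturbations-estimate} runs the very same computation in the $N$-particle setting. The genuine problem is your final paragraph. The closure of $C_c^\infty(\Omega)$ in the graph norm of $\Delta$ is \emph{not} $W^{2,2}(\Omega)\cap W^{1,2}_0(\Omega)$: on $C_c^\infty(\Omega)$ the graph norm is equivalent to the full $W^{2,2}$ norm (extend by zero and use Plancherel), so any graph-norm limit of such functions lies in $W^{2,2}_0(\Omega)$, which is a proper subspace whenever $\partial\Omega \neq \emptyset$. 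For instance $u(x)=x(1-x)$ on $\Omega=(0,1)$ belongs to $W^{2,2}\cap W^{1,2}_0$ but has nonvanishing normal derivative at the endpoints, so no sequence in $C_c^\infty(\Omega)$ converges to it in graph norm. Your approximation step therefore fails precisely on those functions for which the zero-extension obstruction you identified actually occurs, and the limiting argument cannot deliver the $L^\infty$ bound on all of the stated domain.

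A correct repair needs some input about $\Omega$. Either work on $\Omega=\R^n$ (the setting of the cited Th.~X.15, where your argument is already complete), or assume enough boundary regularity to have a bounded extension operator $E:W^{2,2}(\Omega)\rightarrow W^{2,2}(\R^n)$ that is also bounded $L^2\rightarrow L^2$: then $\|\varphi\|_\infty \leq \alpha\|\Delta E\varphi\|_2 + \beta\|E\varphi\|_2$, and converting $\|E\varphi\|_{2,2}$ back into the graph norm of $\Delta$ on $\Omega$ via \cref{th-sobolev-norm-laplace} keeps the coefficient of $\|\Delta\varphi\|_2$ arbitrarily small, so the relative bound stays zero. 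Alternatively, any fixed-constant embedding $W^{2,2}(\Omega)\hookrightarrow L^\infty$ suffices if you instead split $v_1$ into a piece of arbitrarily small $L^2$ norm plus an $L^\infty$ piece (truncation), which yields relative bound zero without a tunable constant. Note also that your opening claim, that the Friedrichs/Dirichlet Laplacian has domain exactly $W^{2,2}(\Omega)\cap W^{1,2}_0(\Omega)$, itself requires boundary regularity (elliptic regularity up to $\partial\Omega$) and is not automatic for arbitrary open connected $\Omega$; the paper glosses over this point as well, since its cited theorem is formulated on $\R^n$.
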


\begin{definition}\label{def-kato-perturbations}
The space of \textbf{Kato perturbations} $L^2(\R^n) + L^\infty(\R^n)$ is equipped with the norm
\begin{equation}
\|v\|_{2+\infty} = \inf\{ \|v_1\|_2 + \|v_2\|_\infty \mid v_1 \in L^2(\R^n), v_2 \in L^\infty(\R^n), v=v_1+v_2\}.
\end{equation}
\end{definition}

The following notation for the extension of potentials to multi-particle systems with $N$ particles is borrowed from \citet{lammert}. Note that in the published version of this work that we cite along the preprint the respective notation has vanished again.

\begin{definition}
For a one-point function $v : \Omega \rightarrow \R$ we define
\begin{equation}
\Gamma v : \Omega^N \rightarrow \R, (x_1,\ldots,x_N) \mapsto \sum_{i=1}^N v(x_i)
\end{equation}
and similarly for a two-point function $w : \Omega \times \Omega \rightarrow \R$
\begin{equation}
\Gamma w : \Omega^N \rightarrow \R, (x_1,\ldots,x_N) \mapsto \frac{1}{2} \sum_{\substack{i,j=1 \\ i\neq j}}^N w(x_i,x_j).
\end{equation}
\end{definition}

\begin{lemma}\label{lemma-kato-perturbations-estimate}
Given the potentials $v,\vint \in L^2(\R^n) + L^\infty(\R^n)$, $n \leq 3$, and the interaction potential $w(x_1,x_2) = \vint(x_1-x_2)$ the multiplication operators $\Gamma v$ and $\Gamma w$ are both $\Delta$-bounded with relative bound 0. There is further a constant $\beta > 0$ such that the following estimates hold for all $\varphi \in W^{2,2}(\Omega^N)$.
\begin{align}
\| (\Gamma v)\varphi \|_2 &\leq N\beta \|v\|_{2+\infty} \|\varphi\|_{2,2} \\
\| (\Gamma w)\varphi \|_2 &\leq \frac{N(N-1)}{2}\beta \|\vint\|_{2+\infty} \|\varphi\|_{2,2}
\end{align}
\end{lemma}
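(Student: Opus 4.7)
My plan is to reduce the $N$-particle estimates to the one-particle case via Fubini and then to sum over particles (respectively over particle pairs). The one-particle core works as follows: for any decomposition $v = v_1 + v_2$ with $v_1 \in L^2(\R^n)$, $v_2 \in L^\infty(\R^n)$, and any $\psi \in W^{2,2}(\R^n)$, a Hölder-type splitting gives
\begin{equation*}
\|v\psi\|_2 \leq \|v_1\|_2 \|\psi\|_\infty + \|v_2\|_\infty \|\psi\|_2,
\end{equation*}
and I would invoke \cref{lemma-ab-inequ} to bound $\|\psi\|_\infty \leq \alpha\|\Delta\psi\|_2 + \beta_\alpha\|\psi\|_2$ for arbitrary $\alpha>0$. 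Taking the infimum over admissible decompositions produces a $\Delta$-boundedness estimate with relative bound $0$, and, once $\alpha$ is fixed, a clean bound $\|v\psi\|_2 \leq \beta\|v\|_{2+\infty}\|\psi\|_{2,2}$, since $\|\psi\|_{2,2}$ controls both $\|\psi\|_2$ and $\|\Delta\psi\|_2$.

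For $\Gamma v = \sum_i v(x_i)$, I would freeze the complementary variables $x_{\neq i}$ and apply the single-particle estimate to the slice $\psi_{x_{\neq i}}(\cdot) := \varphi(x_1,\ldots,\cdot,\ldots,x_N)$, which lies in $W^{2,2}(\R^n)$ for a.e.\ $x_{\neq i}$ by Fubini (extending by zero beyond $\Omega$). Squaring the slice-wise inequality, integrating in $x_{\neq i}$, and using the triangle inequality in $L^2(\Omega^{N-1})$ yields
\begin{equation*}
\|v(x_i)\varphi\|_{L^2(\Omega^N)} \leq \alpha\|v_1\|_2\|\Delta_i\varphi\|_2 + \bigl(\beta_\alpha\|v_1\|_2 + \|v_2\|_\infty\bigr)\|\varphi\|_2.
\end{equation*}
Since $\|\Delta_i\varphi\|_2 \leq \|\varphi\|_{2,2}$ trivially, and in fact $\|\Delta_i\varphi\|_2 \leq \|\Delta\varphi\|_2$ via the Fourier symbol comparison $|k_i|^2 \leq \sum_j |k_j|^2$, summing over the $N$ particles delivers both the advertised $W^{2,2}$ bound with factor $N$ and the $\Delta$-boundedness with relative bound $0$.

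The two-body term $\Gamma w = \frac{1}{2}\sum_{i\neq j}\vint(x_i-x_j)$ is handled identically, the only novelty being translation invariance: freezing all variables other than $x_i$ leaves $\vint(\,\cdot\, - x_j)$ as a function of $x_i$ with the same $L^2(\R^n) + L^\infty(\R^n)$ splitting and norm as $\vint$, while $\partial_{x_i}$ coincides with the derivative in the shifted coordinate. The one-particle estimate therefore applies verbatim on each slice, integration over the remaining variables goes through as before, and summation over the $N(N-1)/2$ unordered pairs produces the second inequality together with the relative bound $0$ claim.

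The main obstacle is purely technical: the fiberwise application of \cref{lemma-ab-inequ} requires a.e.\ slice of $\varphi$ to live in $W^{2,2}(\R^n)$ with finite norms, which is standard via Fubini and extension by zero but deserves a short justification on a general $\Omega$. A secondary bookkeeping concern is to carry the $\alpha$-dependence through the finite summation so that the relative bound $0$ survives; this is automatic because $N$ is a fixed finite parameter and both sums contain only $N$ respectively $N(N-1)/2$ terms.
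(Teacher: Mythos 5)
Your proposal is correct and follows essentially the same route as the paper's proof: slice the $N$-particle function in the coordinate $x_i$, split $v=v_1+v_2$, control the slice sup-norm with \cref{lemma-ab-inequ}, pass from $\|\Delta_i\varphi\|_2$ to $\|\Delta\varphi\|_2$ via the Fourier symbol comparison, and sum over particles, keeping the arbitrarily small $\alpha$ to get relative bound $0$. The only genuine difference is the pair interaction: the paper rotates the whole configuration space so that $x_i-x_j$ becomes a single coordinate (using rotation invariance of the $L^2$ norm), whereas you freeze $x_j$ and use translation invariance, so that $\vint(\cdot-x_j)$ has the same $L^2+L^\infty$ decomposition with norms uniform in $x_j$; your variant is slightly more direct and equally valid. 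The point you flag about slices of $\varphi\in W^{2,2}(\Omega^N)$ needing to lie in $W^{2,2}(\R^n)$ (extension beyond $\Omega$) is glossed over in the paper as well, so it is not a gap relative to the paper's own argument, though in the intended application the intersection with $W_0^{m,2}$ is what makes the extension harmless.
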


\begin{proof}
We adopt the following notation for the norm of the Hilbert space $L^2(\Omega)$ where we assume all coordinates $x_{j\neq i}$ fixed and analogously if only one coordinate $x_i$ is fixed.
\begin{equation}
\|\varphi\|_2^{(i)} = \left( \int_\Omega |\varphi|^2 \d x_i \right)^{1/2}, \quad
\|\varphi\|_2^{(j\neq i)} = \left( \int_\Omega |\varphi|^2 \d x_{j\neq i} \right)^{1/2}
\end{equation}
Note that it holds $\|\varphi\|_2 = \|\|\varphi\|_2^{(i)}\|_2^{(j\neq i)}$ so that we have
\begin{equation}\label{eq-kato-perturbation-two-norms}
\| (\Gamma v)\varphi \|_2 \leq \sum_{i=1}^N \|v(x_i)\varphi\|_2 = \sum_{i=1}^N \|\|v(x_i)\varphi\|_2^{(i)}\|_2^{(j\neq i)}.
\end{equation}
Now the inner norm is estimated with the decomposition $v=v_1 + v_2$, $v_1 \in L^2$, $v_2 \in L^\infty$ as $\|v(x_i)\varphi\|_2^{(i)} \leq \|v_1\|_2 \|\varphi\|_\infty^{(i)} + \| v_2 \|_\infty \|\varphi\|_2^{(i)}$ where we use the obvious notation of $\|\varphi\|_\infty^{(i)}$ as the essential supremum of $\varphi$ over all $x_i \in \Omega$. Note that $\|v_1\|_2$ and $\| v_2 \|_\infty$ are just numbers with no free variables left. It is now time to invoke \cref{lemma-ab-inequ} and have for arbitrarily small $\alpha_i>0$
\begin{equation}\label{eq-use-lemma-ab-inequ}
\|\varphi\|_\infty^{(i)} \leq \alpha_i \|\Delta_i \varphi\|_2^{(i)} +  \beta_i \|\varphi\|_2^{(i)}.
\end{equation}
Combination of these estimates gives
\begin{equation}
\| (\Gamma v)\varphi \|_2 \leq \sum_{i=1}^N \left( \alpha_i \|v_1\|_2 \|\Delta_i \varphi\|_2 + (\beta_i \|v_1\|_2 + \| v_2 \|_\infty) \|\varphi\|_2 \right).
\end{equation}
A final trick is needed to have the full Laplacian $\Delta$ instead of $\Delta_i$ only involving $x_i$. For this we observe that by moving to the Fourier domain with coordinates $k_i \in \R^n$
\begin{equation}
\|\Delta_i \varphi\|_2 = \|k_i^2 \hat\varphi \|_2 \leq \Big\| \sum_{j=1}^N k_j^2 \hat\varphi \Big\|_2 = \|\Delta \varphi\|_2.
\end{equation}
Now define $\alpha = \max_i \alpha_i$ (but still arbitrarily small) and $\beta = \max \{ \beta_1, \ldots, \beta_N, \alpha, 1 \}$ and we get
\begin{equation}
\| (\Gamma v)\varphi \|_2 \leq N \alpha \|v_1\|_2 \|\Delta \varphi\|_2 + N(\beta \|v_1\|_2 + \| v_2 \|_\infty) \|\varphi\|_2.
\end{equation}
This means $\Gamma v$ is $\Delta$-bounded with relative bound 0. If we further introduce $\|v\|_{2+\infty}$ and choose $v_1$ and $v_2$ accordingly then with $\beta$ defined as above we can take it out as an upper estimate. Together with the equivalence of the graph norm of $D(\Delta)$ and the Sobolev norm $\|\cdot\|_{2,2}$ from \cref{th-sobolev-norm-laplace} we arrive at the desired
\begin{equation}
\| (\Gamma v)\varphi \|_2 \leq N\beta \|v\|_{2+\infty} \|\varphi\|_{2,2}.
\end{equation}
The proof for the two-point potential that is defined as an interaction potential involving $\vint(x_i-x_j)$ is analogous but one first has to rotate the whole $\Omega^N \subseteq \R^{nN}$ so that $x_i-x_j$ matches the $x_1$ coordinate. This is possible invariantly because the $L^2$-norm is rotational invariant. The rest of the proof stays the same, we only consider $N(N-1)/2$ components in the sum instead of only $N$.
\end{proof}

\begin{note}
The \cref{lemma-kato-perturbations-estimate} above allows for an extension of \cref{th-kato} to multi-particle systems with Hamiltonian $H = -\Delta + \Gamma w + \Gamma v$ if the involved potentials are of type $v,\vint \in L^2(\R^n) + L^\infty(\R^n)$ with $w(x_i,x_j) = \vint(x_i-x_j)$. The proof structure of \cref{lemma-kato-perturbations-estimate} was inspired by a theorem with this assertion given in \citet[Th.~X.16]{reed-simon-2}.
\end{note}

\begin{definition}\label{def-sobolev-kato-perturbations}
We extend \cref{def-kato-perturbations} (Kato perturbations) to \textbf{Sobolev--Kato perturbations}, defined as the space of potentials
\begin{equation}
W^{m,2+\infty}(\R^n) = \{ v \mid D^\alpha v \in L^2(\R^n)+L^\infty(\R^n), |\alpha|\leq m \}
\end{equation}
with norm
\begin{equation}
\|v\|_{m,2+\infty} = \sum_{|\alpha| \leq m} \| D^\alpha v \|_{2+\infty}.
\end{equation}
\end{definition}

\begin{lemma}\label{lemma-sobolev-kato-perturbations-estimate}
Given the potentials $v,\vint \in W^{2m,2+\infty}(\R^n)$, $n \leq 3$, and $w(x_1,x_2) = \vint(x_1-x_2)$ the multiplication operators $\Gamma v$ and $\Gamma w$ are both $\Delta$-bounded of order $m+1$ with relative bound 0. There is further a constant $\beta > 0$ such that the following estimates hold for all $\varphi \in W^{2(m+1),2}(\Omega^N)$.
\begin{align}
\| (\Gamma v)\varphi \|_{2m,2} &\leq N\beta \|v\|_{2m,2+\infty} \|\varphi\|_{2(m+1),2} \\
\| (\Gamma w)\varphi \|_{2m,2} &\leq \frac{N(N-1)}{2}\beta \|\vint\|_{2m,2+\infty} \|\varphi\|_{2(m+1),2}
\end{align}
\end{lemma}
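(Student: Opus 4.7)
The plan is to generalize Lemma \ref{lemma-kato-perturbations-estimate} by expanding the multiplication operator's $2m$-fold derivatives via the Leibniz rule, estimating each resulting summand by the same $L^\infty$-in-one-variable trick together with Lemma \ref{lemma-ab-inequ}, and then translating the result back into graph-norm form via the Sobolev-norm equivalence of Theorem \ref{th-sobolev-norm-laplace}.

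First I would write
\[
\|(\Gamma v)\varphi\|_{2m,2} \leq \sum_{i=1}^N \sum_{|\alpha|\leq 2m}\|D^\alpha(v(x_i)\varphi)\|_2
\]
and apply Leibniz. Because $v(x_i)$ depends only on the block of coordinates $x_i$, only multi-indices $\beta_i$ supported inside that block produce a nonzero derivative of $v$, giving
\[
D^\alpha(v(x_i)\varphi) = \sum_{\beta_i\leq\alpha_i}\binom{\alpha_i}{\beta_i}(D^{\beta_i}v)(x_i)\,D^{\alpha-\beta_i}\varphi,
\]
where $|\beta_i|\leq 2m$ so that $D^{\beta_i}v\in L^2(\R^n)+L^\infty(\R^n)$ by hypothesis. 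Each summand is split along a decomposition $D^{\beta_i}v=f_1+f_2$ with $f_1\in L^2$, $f_2\in L^\infty$: the $L^\infty$-part is bounded by $\|f_2\|_\infty\|\varphi\|_{2m,2}$, while the $L^2$-part is treated exactly as in Lemma \ref{lemma-kato-perturbations-estimate}—the $L^2(\Omega^N)$-norm is rewritten as the nested norm $\|\|\cdot\|_2^{(i)}\|_2^{(j\neq i)}$, the one-variable sup-norm of $D^{\alpha-\beta_i}\varphi$ is pulled out against $\|f_1\|_2$, and Lemma \ref{lemma-ab-inequ} is invoked with an arbitrarily small constant $\alpha_0>0$ giving
\[
\|D^{\alpha-\beta_i}\varphi\|_\infty^{(i)} \leq \alpha_0\|\Delta_i D^{\alpha-\beta_i}\varphi\|_2^{(i)} + \beta_0\|D^{\alpha-\beta_i}\varphi\|_2^{(i)}.
\]
After integration over the remaining variables this produces $\varphi$-derivatives of order at most $|\alpha-\beta_i|+2\leq 2(m+1)$.

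To close the estimate I would use the Fourier bound $\|\Delta_i g\|_2\leq\|\Delta g\|_2$ from the proof of Lemma \ref{lemma-kato-perturbations-estimate} to trade the single-particle Laplacian for the full one, sum over $i$, $\alpha$, $\beta_i$, absorb all combinatorial constants into a single $\beta$, take the infimum over the $L^2+L^\infty$-decomposition of each $D^{\beta_i}v$ to introduce $\|v\|_{2m,2+\infty}$, and finally invoke Theorem \ref{th-sobolev-norm-laplace} to recognise the norm of $D(\Delta^{m+1})$ as equivalent to $\|\cdot\|_{2(m+1),2}$. The corresponding estimate for $\Gamma w$ is obtained by rotating $\R^{nN}$ so that $x_i-x_j$ is aligned with the first coordinate block (the $L^2$-norm is rotationally invariant) and then applying the one-variable argument to $\vint$; the $N(N-1)/2$ summands yield the indicated combinatorial prefactor.

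Relative-bound zero in the sense of Definition \ref{def-rel-boundedness} follows by inspection of the expansion: the only contribution involving the top-order derivative $\|\Delta^{m+1}\varphi\|$ comes from the single summand with $\beta_i=0$ and $|\alpha|=2m$, to which the extra $\Delta_i$ is added by Lemma \ref{lemma-ab-inequ}, and it arrives multiplied by the arbitrarily small $\alpha_0$; every other summand involves $\varphi$-derivatives of strictly lower order and can be absorbed into the $b\|\varphi\|_{(k-1)}$-part of the estimate uniformly in $1\leq k\leq m+1$ (the lower-$k$ estimates use only the available regularity $v\in W^{2m,2+\infty}\subseteq W^{2(k-1),2+\infty}$). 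The main obstacle is thus not the individual inequalities, each of which replays a step from Lemma \ref{lemma-kato-perturbations-estimate}, but the careful bookkeeping needed to isolate the single top-order contribution and confirm that the $\alpha_0$-factor can be pushed to zero simultaneously at every intermediate level $k$.
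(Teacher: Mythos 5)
Your proposal is correct and follows essentially the same route as the paper's proof: Leibniz expansion of the $W^{2m,2}$ norm, reduction to the one-coordinate trick of \cref{lemma-kato-perturbations-estimate} via the $L^2+L^\infty$ splitting and \cref{lemma-ab-inequ}, the Fourier bound $\|\Delta_i\varphi\|_2\leq\|\Delta\varphi\|_2$, the rotation argument for $\Gamma w$, and \cref{th-sobolev-norm-laplace} to pass between Sobolev and graph norms, with relative bound $0$ obtained by retaining the arbitrarily small constants from \cref{lemma-ab-inequ}. One tiny caveat: terms of intermediate order $2k-1$ are not literally absorbed into the $b\|\varphi\|_{(k-1)}$ part as you state, but since they also carry the small factor from \cref{lemma-ab-inequ} they go into the $a$-term, which matches the paper's (equally brief) remark on the relative bound.
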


\begin{proof}
\Cref{lemma-kato-perturbations-estimate} already shows the case $m=0$ and proceeding from that we give the proof for arbitrary orders $m$. We start by writing out the involved Sobolev space norm explicitly, then employ the general Leibniz rule for multivariable calculus.
\begin{equation}
\begin{aligned}
\|(\Gamma v) \varphi\|_{2m,2} \sim \smashoperator{\sum_{|\alpha| \leq 2m}} \left\| D^\alpha ((\Gamma v) \varphi) \right\|_2 &= \sum_{|\alpha| \leq 2m} \Big\| \sum_{\nu \leq \alpha} {\alpha \choose \nu} \left(D^\nu (\Gamma v)\right)  D^{\alpha-\nu} \varphi \Big\|_2 \\
&\leq {2m \choose m}  \sum_{|\alpha| \leq 2m} \sum_{\nu \leq \alpha} \left\| \left(D^\nu (\Gamma v)\right) D^{\alpha-\nu} \varphi \right\|_2
\end{aligned}
\end{equation}
The multi-index binomial coefficient is estimated by its largest possible value. Next we use the property of $\Gamma v$ that makes the potential the sum of one-coordinate potentials. Thus instead of the full $D^\nu$ only $D^{\nu_i}$ acts on the individual terms of the sum in $\Gamma v$. Note that these $\nu_i$ from $\nu = (\nu_1,\ldots,\nu_N)$ are still $n$-tuples. In any case we have
\begin{equation}
D^\nu (\Gamma v) = \sum_{i=1}^N D^{\nu_i} v(x_i)
\end{equation}
and thus
\begin{equation}
\begin{aligned}
\smashoperator{\sum_{|\alpha| \leq 2m}} \left\| D^\alpha ((\Gamma v) \varphi) \right\|_2 &\leq {2m \choose m}  \sum_{|\alpha| \leq 2m} \sum_{\nu \leq \alpha} \sum_{i=1}^N \left\| \left( D^{\nu_i} v(x_i) \right) D^{\alpha-\nu} \varphi \right\|_2 \\
&=  {2m \choose m}  \sum_{|\alpha| \leq 2m} \sum_{\nu \leq \alpha} \sum_{i=1}^N \left\| \left\| \left( D^{\nu_i} v(x_i) \right) D^{\alpha-\nu} \varphi \right\|_2^{(i)} \right\|_2^{(j\neq i)}
\end{aligned}
\end{equation}
like in \eqref{eq-kato-perturbation-two-norms}. The proof then proceeds exactly like in \cref{lemma-kato-perturbations-estimate} since we have $D^{\nu_i} v(x_i) \in L^2+L^\infty$ due to the assumption $v \in W^{m,2+\infty}$. In total we get the estimate
\begin{equation}
\smashoperator{\sum_{|\alpha| \leq 2m}} \left\| D^\alpha ((\Gamma v) \varphi) \right\|_2 \leq N \beta' {2m \choose m} \|v\|_{2m,2+\infty} \|\varphi\|_{2(m+1),2}
\end{equation}
where the sums over the multi-indices get combined and estimated by the higher Sobolev norms. The order of the Sobolev norm of $\varphi$ has increased by 2 because we had to rely on \cref{lemma-ab-inequ} again. The constant $\beta'$ that is defined similar as in \cref{lemma-kato-perturbations-estimate} before gets combined together with the binomial coefficient to form a constant $\beta$ and we arrive at the desired result.
If we keep the arbitrarily small $\alpha_i$ that are introduced analogously to \eqref{eq-use-lemma-ab-inequ}, then this also yields the desired $\Delta$-boundedness of order $m+1$ with relative bound 0. In both cases the equivalence of the graph norm of $D(\Delta^m)$ and the Sobolev norm $\|\cdot\|_{2m,m}$ from \cref{th-sobolev-norm-laplace} gets applied.\\
The way for a two-point potential is the same as before with the only difference that we have to observe
\begin{equation}
D^\nu (\Gamma w) = \frac{1}{2} \sum_{\substack{i,j=1 \\ i\neq j}} D^{\nu_i}D^{\nu_j} \vint(x_i-x_j)
\end{equation}
before rotating $x_i-x_j$ again so that it matches the $x_1$ coordinate in the individual contributions of the norm.
\end{proof}

\begin{theorem}\label{th-se}
The $N$-particle Schrödinger equation
\begin{equation}
\i\partial_t \psi(t) = H(t)\psi(t) = (-\Delta + \Gamma w + \Gamma v(t)) \psi(t)
\end{equation}
on the Hilbert space $L^2(\Omega^N)$, $\Omega \subseteq \R^n$ open and connected, $n \leq 3$, with
\begin{equation}
v \in \mathrm{Lip}([0,T], W^{2(m-1),2+\infty}(\R^n))
\end{equation}
and $w(x_i,x_j) = \vint(x_i-x_j)$, $\vint \in W^{2(m-1),2+\infty}(\R^n)$ has a well-defined unitary evolution system that is bounded as a mapping
\begin{equation}
U(t,s) : W^{2m,2}(\Omega^N) \cap W^{m,2}_0(\Omega^N) \rightarrow W^{2m,2}(\Omega^N) \cap W^{m,2}_0(\Omega^N), \quad t,s \in [0,T].
\end{equation}
This effectively establishes Sobolev regularity of solutions up to order $W^{2m,2}$.
\end{theorem}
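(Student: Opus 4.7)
The approach is to cast the Schrödinger equation into the abstract form \eqref{eq-evolution-equation} with principal part $A = \i\Delta$ and perturbation $B(t) = -\i(\Gamma w + \Gamma v(t))$, verify all hypotheses of \cref{th-regularity} for the given $m$, and then translate the resulting $D(A^m)$-regularity into Sobolev regularity via \cref{th-sobolev-norm-laplace}.

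First I would identify the principal part. Restricting $\Delta$ to $W^{2,2}(\Omega^N)\cap W^{1,2}_0(\Omega^N)$ makes it self-adjoint, so $A=\i\Delta$ is skew-adjoint, hence closed by \cref{lemma-sa-closed} and with resolvent set containing all of $\R\setminus\{0\}$ (since its spectrum lies on the imaginary axis). Next, at each fixed $t$ I apply \cref{lemma-sobolev-kato-perturbations-estimate} with $v(t),\vint\in W^{2(m-1),2+\infty}(\R^n)$ to deduce that $\Gamma v(t)$ and $\Gamma w$ are $\Delta$-bounded of order $m$ with relative bound $0$; equivalently $B(t)$ is $A$-bounded of order $m$ with relative bound $0$. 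The standard Kato--Rellich argument (\cref{th-kato} together with the multi-particle extension noted after \cref{lemma-kato-perturbations-estimate}) then yields $H(t) = -\Delta + \Gamma w + \Gamma v(t)$ self-adjoint on the given domain, so $G(t) = -\i H(t)$ is skew-adjoint and by Stone's theorem \cref{th-stone} generates a unitary group, i.e.\ a quasicontraction semigroup with $\omega = 0$.

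The next step is to verify the Lipschitz condition \eqref{eq-lipschitz}. Since $B(t')-B(t) = -\i\,\Gamma(v(t')-v(t))$, I apply \cref{lemma-sobolev-kato-perturbations-estimate} once for each order $k-1$, $1\le k\le m$, obtaining
\begin{equation*}
\|\Gamma(v(t')-v(t))\varphi\|_{2(k-1),2} \;\le\; N\beta\,\|v(t')-v(t)\|_{2(k-1),2+\infty}\,\|\varphi\|_{2k,2}.
\end{equation*}
By \cref{th-sobolev-norm-laplace} the Sobolev norm $\|\cdot\|_{2k,2}$ is equivalent to the graph norm $\|\cdot\|_{D(\Delta^k)} = \|\cdot\|_{(k)}$, so this rewrites as $\|B(t')-B(t)\|_{(k,k-1)} \le C\,\|v(t')-v(t)\|_{2(m-1),2+\infty}$, using that the Sobolev--Kato norm is monotone in its order. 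Summing over $k$ and dividing by $|t'-t|$, the Lipschitz hypothesis $v\in\mathrm{Lip}([0,T],W^{2(m-1),2+\infty})$ immediately yields $L_m<\infty$.

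All hypotheses of \cref{th-regularity} are then satisfied, and it supplies an evolution system $U(t,s)$ bounded as $D(A^m)\to D(A^m)$. Invoking \cref{th-sobolev-norm-laplace} once more, $D(A^m) = D(\Delta^m) = W^{2m,2}(\Omega^N)\cap W^{m,2}_0(\Omega^N)$ with equivalent norms, which gives the asserted Sobolev mapping property. Unitarity of $U(t,s)$ follows as in \cref{note-unitary-evolution-system}: each stepwise static approximation $U_k(t,s)$ is a product of unitaries by Stone applied to the skew-adjoint $G(t_i)$, and unitarity survives the strong limit established in the proof of \cref{th-regularity}. The main technical care is the verification of the Lipschitz condition, since \cref{lemma-sobolev-kato-perturbations-estimate} is stated at a single order and one has to keep track uniformly in $k\le m$ of the constants and of the equivalence between graph and Sobolev norms; the remaining work is bookkeeping.
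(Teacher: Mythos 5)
Your proposal is correct and follows essentially the same route as the paper: decompose $G(t)=\i\Delta-\i(\Gamma w+\Gamma v(t))$, use \cref{lemma-sobolev-kato-perturbations-estimate} with Kato--Rellich and Stone to get skew-adjoint generators of unitary (quasicontraction) groups, verify the hypotheses of \cref{lemma-K-bounded} and \cref{th-regularity}, and translate $D(\Delta^m)$ into $W^{2m,2}\cap W^{m,2}_0$ via \cref{th-sobolev-norm-laplace}. Your explicit order-by-order verification of the Lipschitz condition \eqref{eq-lipschitz} is a welcome elaboration of a step the paper's proof leaves implicit.
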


\begin{proof}
We rewrite the Schrödinger equation as
\begin{equation}
\partial_t \psi(t) = -\i H(t)\psi(t) = \i(\Delta - \Gamma w - \Gamma v(t)) \psi(t)
\end{equation}
and take $A=\i \Delta$, $G(t) = -\i H(t)$. Both operators are generators of contraction semigroups because of \cref{th-stone} (Stone) in conjunction with \cref{lemma-kato-perturbations-estimate} and the Kato--Rellich theorem. \Cref{lemma-sobolev-kato-perturbations-estimate} tells us that $\Gamma v(t)$ and $\Gamma w$ are bounded operators $W^{2m,2}(\Omega^N) \rightarrow W^{2(m-1),2}(\Omega^N)$ as well as $\Delta$-bounded of order $m$ with relative bound 0. Thus all the requirements on the non-autonomous perturbation from \cref{lemma-K-bounded} are fulfilled and \cref{th-regularity} becomes applicable which establishes the desired regularity result.
\end{proof}

\begin{note}
A final note shall make the setting even more ``physical'' and turns attention towards the standard example for external potentials and interactions, the Coulomb potential on $\Omega = \R^3$. If we take $v(x) = -|x|^{-1}$ (attractive) or $\vint(x) = |x|^{-1}$ (repulsive) those potentials lie in the class of Kato perturbations $L^2(\R^3) + L^\infty(\R^3)$. Even the more singular choice of $|x|^{-3/2+\varepsilon}$ for arbitrarily small $\varepsilon$ is permitted. Further the potentials can be time-dependent under the constraint of the introduced Lipschitz condition. But those potentials already drop out of the next higher regularity class $W^{1,2+\infty}(\R^3)$ thus \cref{th-se} only guarantees Sobolev regularity up to $W^{2,2}$ for solutions to the Schrödinger equation with Coulomb potentials. 
\end{note}

\section*{Acknowledgements}

With special thanks to Jonas Lampart for pointing out the works of Schmid--Griesemer, as well as to Michael Ruggenthaler and Eric Stachura for helpful discussions. An anonymous referee spotted a critical flaw in the proof of \cref{lemma-K-bounded}, that in the course led to the new \cref{def-rel-boundedness} and {\cref{lemma-AG-equiv}}, and helped to improve the manuscript substantially. This work was supported by the Erwin Schrödinger Fellowship J 4107-N27 of the FWF (Austrian Science Fund).

\newpage

\end{document}